\newcommand{\ket}[1]{{\left\vert{#1}\right\rangle}}
\newtheorem{theorem}{Theorem}
\newtheorem{corollary}[theorem]{Corollary}
\newtheorem{lemma}[theorem]{Lemma}
\theoremstyle{definition}
\newtheorem{definition}{Definition}
\newtheorem{proposition}[definition]{Proposition}
\newcommand{\meas}{
\begin{tikzpicture}
\filldraw[fill=white] (0,.25) rectangle (.7,-.25);
\draw (.67,-.1) arc (50:130:.5);
\draw (.35,-.2)--(.525,.2);
\end{tikzpicture}
}
\newcommand{\cntrl}{
\begin{tikzpicture}
\fill (0,0) circle (.08);
\end{tikzpicture}
}
\def\squareforqed{\hbox{\rlap{$\sqcap$}$\sqcup$}}
\def\qed{\ifmmode\squareforqed\else{\unskip\nobreak\hfil
\penalty50\hskip1em\null\nobreak\hfil\squareforqed
\parfillskip=0pt\finalhyphendemerits=0\endgraf}\fi}
\def\endenv{\ifmmode\;\else{\unskip\nobreak\hfil
\penalty50\hskip1em\null\nobreak\hfil\;
\parfillskip=0pt\finalhyphendemerits=0\endgraf}\fi}
\def\Dbar{\leavevmode\lower.6ex\hbox to 0pt
{\hskip-.23ex\accent"16\hss}D}
\def\bcj{\begin{conjecture}}
\def\ecj{\end{conjecture}}
\def\bcr{\begin{corollary}}
\def\ecr{\end{corollary}}
\def\bd{\begin{definition}}
\def\ed{\end{definition}}
\def\bea{\begin{eqnarray}}
\def\eea{\end{eqnarray}}
\def\bem{\begin{enumerate}}
\def\eem{\end{enumerate}}
\def\bex{\begin{example}}
\def\eex{\end{example}}
\def\bim{\begin{itemize}}
\def\eim{\end{itemize}}
\def\bl{\begin{lemma}}
\def\el{\end{lemma}}
\def\bpf{\begin{proof}}
\def\epf{\end{proof}}
\def\bpp{\begin{proposition}}
\def\epp{\end{proposition}}
\def\bqu{\begin{question}}
\def\equ{\end{question}}
\def\br{\begin{remark}}
\def\er{\end{remark}}
\def\bt{\begin{theorem}}
\def\et{\end{theorem}}
\def\btb{\begin{tabular}}
\def\etb{\end{tabular}}
\newcommand{\nc}{\newcommand}
\def\e{\epsilon}
 \nc{\bA}{{\bf A}} \nc{\bB}{{\bf B}} \nc{\bC}{{\bf C}}
 \nc{\bD}{{\bf D}} \nc{\bE}{{\bf E}} \nc{\bF}{{\bf F}}
 \nc{\bG}{{\bf G}} \nc{\bH}{{\bf H}} \nc{\bI}{{\bf I}}
 \nc{\bJ}{{\bf J}} \nc{\bK}{{\bf K}} \nc{\bL}{{\bf L}}
 \nc{\bM}{{\bf M}} \nc{\bN}{{\bf N}} \nc{\bO}{{\bf O}}
 \nc{\bP}{{\bf P}} \nc{\bQ}{{\bf Q}} \nc{\bR}{{\bf R}}
 \nc{\bS}{{\bf S}} \nc{\bT}{{\bf T}} \nc{\bU}{{\bf U}}
 \nc{\bV}{{\bf V}} \nc{\bW}{{\bf W}} \nc{\bX}{{\bf X}}
 \nc{\bZ}{{\bf Z}}
\nc{\cA}{{\cal A}} \nc{\cB}{{\cal B}} \nc{\cC}{{\cal C}}
\nc{\cD}{{\cal D}} \nc{\cE}{{\cal E}} \nc{\cF}{{\cal F}}
\nc{\cG}{{\cal G}} \nc{\cH}{{\cal H}} \nc{\cI}{{\cal I}}
\nc{\cJ}{{\cal J}} \nc{\cK}{{\cal K}} \nc{\cL}{{\cal L}}
\nc{\cM}{{\cal M}} \nc{\cN}{{\cal N}} \nc{\cO}{{\cal O}}
\nc{\cP}{{\cal P}} \nc{\cQ}{{\cal Q}} \nc{\cR}{{\cal R}}
\nc{\cS}{{\cal S}} \nc{\cT}{{\cal T}} \nc{\cU}{{\cal U}}
\nc{\cV}{{\cal V}} \nc{\cW}{{\cal W}} \nc{\cX}{{\cal X}}
\nc{\cZ}{{\cal Z}}
\nc{\hA}{{\hat{A}}} \nc{\hB}{{\hat{B}}} \nc{\hC}{{\hat{C}}}
\nc{\hD}{{\hat{D}}} \nc{\hE}{{\hat{E}}} \nc{\hF}{{\hat{F}}}
\nc{\hG}{{\hat{G}}} \nc{\hH}{{\hat{H}}} \nc{\hI}{{\hat{I}}}
\nc{\hJ}{{\hat{J}}} \nc{\hK}{{\hat{K}}} \nc{\hL}{{\hat{L}}}
\nc{\hM}{{\hat{M}}} \nc{\hN}{{\hat{N}}} \nc{\hO}{{\hat{O}}}
\nc{\hP}{{\hat{P}}} \nc{\hR}{{\hat{R}}} \nc{\hS}{{\hat{S}}}
\nc{\hT}{{\hat{T}}} \nc{\hU}{{\hat{U}}} \nc{\hV}{{\hat{V}}}
\nc{\hW}{{\hat{W}}} \nc{\hX}{{\hat{X}}} \nc{\hZ}{{\hat{Z}}}
\nc{\hn}{{\hat{n}}}
\def\tr{\mathop{\rm Tr}}
\def\ox{\otimes}
\newcommand{\ketbra}[2]{|#1\rangle\!\langle#2|}
\newcommand{\tgate}{{\sf T}}
\newcommand{\hgate}{{\sf H}}
\newcommand{\pgate}{{\sf P}}
\newcommand{\igate}{{\mathbb I}}
\newcommand{\cnot}{{\sf CNOT}}
\newcommand{\R}{{\sf R}}
\newcommand{\Xgate}{{\sf X}}
\newcommand{\Zgate}{{\sf Z}}
\begin{document}
\title{A quantum homomorphic encryption scheme for polynomial-sized circuits}
\author{Li Yu}\email{yupapers@sina.com}
\affiliation{Department of Physics, Hangzhou Normal University, Hangzhou, Zhejiang 311121, China}
\date{\today}

\begin{abstract}
Quantum homomorphic encryption (QHE) is an encryption method that allows quantum computation to be performed on one party's private data with the program provided by another party, without revealing much information about the data nor about the program to the opposite party. It is known that information-theoretically-secure QHE for circuits of unrestricted size would require exponential resources, and efficient computationally-secure QHE schemes for polynomial-sized quantum circuits have been constructed. In this paper we first propose a QHE scheme for a type of circuits of polynomial depth, based on the rebit quantum computation formalism. The scheme keeps the restricted type of data perfectly secure. We then propose a QHE scheme for a larger class of polynomial-depth quantum circuits, which has partial data privacy. Both schemes have good circuit privacy. We also propose an interactive QHE scheme with asymptotic data privacy, however, the circuit privacy is not good, in the sense that the party who provides the data could cheat and learn about the circuit. We show that such cheating would generally affect the correctness of the evaluation or cause deviation from the protocol. Hence the cheating can be caught by the opposite party in an interactive scheme with embedded verifications. Such scheme with verification has a minor drawback in data privacy. Finally, we show some methods which achieve some nontrivial level of data privacy and circuit privacy without resorting to allowing early terminations, in both the QHE problem and in secure evaluation of classical functions. The entanglement and classical communication costs in these schemes are polynomial in the circuit size and the security parameter (if any).
\end{abstract}
\maketitle

%\tableofcontents

\section{Introduction}\label{sec1}

As quantum computers are difficult to build, a practical way for doing quantum computation is using the client-server model. The quantum computer is placed at the server, and the clients may have limited quantum capabilities. The client and the server may want to hide some information about their program or data from each other. This is part of the reason why security in bipartite quantum computing is of interest. Bipartite quantum computation may also be related to or directly implement some classical or quantum cryptographic tasks. Two problems in bipartite quantum computing have attracted a lot of attention. One of them is blind quantum computing, and the other is quantum homomorphic encryption, the main topic of this paper.

In classical cryptography, homomorphic encryption (HE) is an encryption scheme that allows computation to be performed (effectively on the plaintext after decryption) while having access only to the ciphertext. A \emph{fully} homomorphic encryption (FHE) scheme allows for \emph{any} computation to be performed in such fashion. The known schemes \cite{Gentry09,brakerski2011efficient} are based on computational assumptions. Quantum (fully) homomorphic encryption (QHE and QFHE respectively) allows (universal) quantum computation to be performed without accessing the unencrypted data. Schemes for quantum homomorphic encryption \cite{rfg12,Fisher13,Tan16,Ouyang18,bj15,Dulek16,Lai17,Mahadev17,TOR18} allow two stages of communication, with some initial shared entangled state. The requirements are that the final computation result is correct, and the data and the final computation result are known only to the data-provider, who learns little about the circuit performed beyond what can be deduced from the result of the computation itself.  There is no limit to the quantum capabilities of any party. We regard interactive QHE schemes as those schemes with similar security characteristics as the usual QHE schemes, but allows many rounds of communication. We allow any party to abort during the interactive protocol.

It is known that information-theoretically-secure (ITS) QHE for circuits of unrestricted size necessarily incurs exponential overhead \cite{NS17,Lai17,Newman18}. For the special case of unrestricted quantum input states and perfect data privacy, a stronger lower bound is stated in \cite{ypf14}. Some computationally-secure QHE schemes for polynomial-sized quantum circuits have been constructed \cite{bj15,Dulek16,Mahadev17}.

In this paper, we firstly introduce a QHE scheme for a restricted type of circuits (Scheme~\ref{sch1}). It keeps the real product input state perfectly secure, and the circuit privacy is optimal up to a constant factor. We then propose a QHE scheme for a larger class of polynomial-depth quantum circuits, which has partial data privacy and near-optimal circuit privacy. The entanglement and classical communication costs in both schemes scale linearly with the product of input size and circuit depth.

We then introduce a partially-secure interactive QHE scheme for polynomial-sized circuits, which is Scheme~\ref{schmain} in Sec.~\ref{sec5}. It is independent from the two schemes above. It uses a subprocedure (Scheme~\ref{schsub}) for computing classical linear polynomials. The Scheme~\ref{schmain} has asymptotic data privacy, but the circuit privacy is not good. We then propose an enhanced version with embedded verifications, in which Bob, who provides the circuit, may choose to abort if the messages returned from Alice during the protocol do not pass his verifications. Such enhanced scheme has both quite satisfactory data privacy and circuit privacy: it leaks a constant amount of information about the data, where the constant is determined by the security requirements; and the circuit privacy is quite good since the cheating by Alice is likely to be caught.

In Sec.~\ref{sec6}, we propose a variant of Scheme~\ref{schsub}, denoted as Scheme~\ref{schsub2}, which achieves some nontrivial level of data privacy and circuit privacy. The degree of data privacy does not satisfy the IND-CPA criterion, but for some range of parameters, it is good when the input is \emph{a priori} uniformly random in the other party's view. There are also other variants which interpolates between Scheme~\ref{schsub} and Scheme~\ref{schsub2}. In Sec.~\ref{sec7}, we propose Scheme~\ref{schsub3} for computing classical linear polynomials. It is based on the idea of information locking by quantum techniques, but also contains some idea from Scheme~\ref{schsub2}. It generally achieves better security than Scheme~\ref{schsub2}. In Sec.~\ref{sec8}, we present methods for combined use of the schemes for evaluating classical linear polynomials, and mention some application to interactive QHE with limited circuit depth. In Sec.~\ref{sec9}, we present some interactive protocols that make use of the previously proposed schemes for evaluating general classical functions with some nontrivial level of security.

The rest of the paper is organized as follows. In Sec.~\ref{sec2} we review some literature results. In Sec.~\ref{sec3} we will introduce some preliminary knowledge. Secs.~\ref{sec4} presents the two QHE schemes which are secure for some types of data and some types of circuits. Sec.~\ref{sec5} contains an interactive QHE scheme and its enhanced version. The Secs.~\ref{sec6} and \ref{sec7} contain variants of Scheme~\ref{schsub}. Sec.~\ref{sec8} contains methods for combined use of the schemes for evaluating classical linear polynomials. Sec.~\ref{sec9} contains some methods for evaluating general classical functions with some nontrivial level of security. Sec.~\ref{sec10} contains the conclusion and some open problems.

\section{Review of past results}\label{sec2}

In the following we put our work in the context of literature results on quantum homomorphic encryption and related problems.

Blind quantum computing (BQC) is a task in which Alice knows both the data and program but can only do limited quantum operations, and the main part of the computation is to be carried out by Bob, who learns neither the data nor much about the program. The output is to be known to Alice but not Bob (this is different from QHE, in which the party with the program does not know the output). Broadbent, Fitzsimons and Kashefi presented a BQC scheme \cite{bfk09} using the measurement-based quantum computing model. An experimental demonstration is in \cite{Barz12}. There are other BQC schemes based on the circuit model \cite{ABE10}, or the ancilla-driven model \cite{skm13}. The possibility of Alice being a classical client (allowing partial leakage of the client's information) is discussed in \cite{Mantri17}.

In Yu \emph{et al} \cite{ypf14}, it is shown that the condition of perfect data privacy implies an information localization phenomenon, and this puts restrictions on the possible forms of QHE schemes and their complexity. The Theorem 1 and Corollary 1 of \cite{ypf14} are for the case that input states are unrestricted. For restricted input states such as real product states, the Scheme~\ref{sch1} in the current work and the scheme in \cite{Lai17} provide examples for which the above statements do not apply.

Newman and Shi \cite{NS17} (see also \cite{Newman18}), concurrently with Lai and Chung \cite{Lai17}, showed that ITS-QFHE schemes require communication cost (including entanglement cost) that is at least exponential in the number of input qubits, where ``ITS'' allows a negligible amount of data leakage. This is based on the work of Nayak \cite{Nayak99}.

On computing two-party classical functions with quantum circuits, Lo \cite{Lo97} studied the data privacy in the case that the output is on one party only, and both parties know the function. The security of two-party quantum computation for fixed classical function with both parties knowing the outcome has been studied by Buhrman \emph{et al} \cite{bcs12}.

There is a line of work on partially-ITS QHE, which are sometimes limited to certain types of circuits. In the scheme of Rohde \emph{et al} \cite{rfg12}, only a logarithmic amount of information about the data is hidden, and the type of circuit is limited to Boson sampling. Tan {\it et al} \cite{Tan16,TOR18} presented QHE schemes that allow some types of quantum circuits and hides only partial information (but more than that in \cite{rfg12}) about the data. Ouyang {\it et al} \cite{Ouyang18} presented a scheme that hides almost all information about the data, but can only perform Clifford gates and a constant number of non-Clifford gates. In the scheme in \cite{Ouyang18}, the party with initial data could cheat by replacing the mixed qubits by pure qubits in some particular input state to gain partial knowledge about the Clifford gates applied, without affecting the correctness of the evaluation. The degree of circuit privacy is related to the size of the code, thus it is in a tradeoff with data privacy.

The following results are based on the existence of computationally-secure classical homomorphic encryption schemes. Broadbent and Jeffery \cite{bj15} presented two computationally-secure QFHE schemes that are relatively efficient for polynomial-sized quantum circuits with limited use of $\tgate$ gates (and arbitrary use of Clifford gates). A compact QFHE scheme was proposed by Dulek, Schaffner and Speelman \cite{Dulek16}. Its complexity of decryption (but not the overall computation and communication costs) could be independent of the size of the circuit to be evaluated, by choosing the classical HE scheme in it appropriately. Mahadev \cite{Mahadev17} presented a quantum leveled fully homomorphic encryption scheme using classical homomorphic encryption schemes with certain properties. The client holding the initial data in the scheme is almost classical except for the sending and receiving of quantum states at the beginning and at the end. Alagic \emph{et al} devised a computationally-secure QFHE scheme with verification \cite{ADSS17}, where the client is similarly almost classical (and completely classical in the case of classical input and output).

Verification of quantum computations is an interesting problem on its own, but the study of this problem has been closely related to that of blind quantum computing. The fact that it is related to QHE has been hinted in \cite{Mahadev17,Mahadev18v,ADSS17}. Usually, the verification problem involves a quantum prover and a quantum verifier, and the prover tries to convince the verifier that a quantum computation has been done as planned. Most studies assume the verifier to have some quantum capabilities \cite{FK17,ABE10,ABOEM17,Broadbent18,FHM18}. There are also significant progress in using a classical or almost classical verifier \cite{ADSS17,DOF18,Mahadev18v,FHM18}. The protocols in \cite{FHM18} show that verification can be done after the computation is done, while not necessarily achieving blindness of the computation.

Some schemes for delegated quantum computation have been proposed \cite{Ch05,Fisher13,MJS16}. In their original form, they usually have good data privacy but are not necessarily strong in circuit privacy. The Supplementary Note 1 of \cite{Fisher13} contains a method based on so-called ``universal circuits'' to achieve blind quantum computing (hiding the information about the circuit which is initially on the same party as the data) but this does not make it useful for the QHE problem, since the input data and the program are on opposite parties in QHE, unlike in BQC. It is pointed out in \cite{bfk09} (with details in \cite{Fitz17}) that the protocol in \cite{Ch05} can be modified to become a blind quantum computing protocol having circuit privacy. This method is also not useful for QHE.

Some lower bounds on the length of classical encryption key in an ITS QHE scheme can be found in \cite{Lai19}.

\section{Preliminaries}\label{sec3}

First, we give a definition for quantum homomorphic encryption. Note the Encryption algorithm below absorbs both the key generation algorithm and the encryption algorithm in \cite{bj15}. The measurements are explicitly allowed.

\begin{definition} \label{def1}
A quantum homomorphic encryption scheme (QHE) $\cE$ contains the following three algorithms, where $\lambda$ is the security parameter, and $T$ refers to a quantum circuit which may contain measurements and quantum gates controlled by classical functions of measurement outcomes,  $\phi$ is an initial entangled resource state shared by Alice and Bob, and $\sigma$ is the input data state of Alice.
\begin{enumerate}
\item Encryption algorithm $Encrypt_\cE(\lambda,\sigma,\phi)$. The algorithm outputs some quantum state $\Pi$ called the \emph{ciphertext}. The reduced state of $\Pi$ on Alice's and Bob's systems are denoted $\Pi_a$ and $\Pi_b$, respectively.
\item Evaluation algorithm $Evaluate_\cE(\lambda,T,\Pi_b)$ that does a computation $T$ on $\Pi_b$ without decryption.
\item Decryption algorithm $Decrypt_\cE(\lambda,\Pi_a,\Pi_b')$, where $\Pi_b':=Evaluate_\cE(\lambda,T,\Pi_b)$.
\end{enumerate}
\end{definition}

A scheme is \emph{correct} if for any input state $\sigma$ with associated ciphertext $\Pi$, and any circuit $T$ chosen from a set of permitted quantum circuits $T_\cE$, $Decrypt_\cE(\lambda,\Pi_a,\Pi_b')=T(\sigma)$ (up to a possible global phase).  A scheme with perfect data privacy is one such that Bob cannot learn anything about the input $\sigma$, except the size of the input, from $\Pi_b$. The requirement of \emph{circuit privacy} refers to that Alice does not learn anything about the circuit performed beyond what can be deduced from the result of the computation itself.  In actual protocols we seem only able to approach this goal. A quantum fully homomorphic encryption (QFHE) scheme is one that is homomorphic for all circuits, i.e. the set $T_\cE$ is the set of all unitary quantum circuits. For approximate data privacy in QHE, there is a q-IND-CPA criterion \cite{bj15} which is related to the indistinguishability of encoded states. We regard interactive QHE schemes as those have similar security characteristics as the usual QHE schemes, but allow many rounds of communication. Naturally, any one party may choose to abort during the interactive protocol (but we exclude that in the scheme in Sec~\ref{sec6}). In Sec.~\ref{sec6}, we encounter a new criterion for data privacy in (interactive) QHE schemes in terms of mutual information for uniformly random inputs.

Here is a brief note about data privacy. It is well known [see for example Theorem 1(ii) of \cite{ygc10}] that if a unitary is exactly and deterministically implemented on all possible pure input states, no information about the input is leaked to the environment (which can be understood as the opposite party in our setting). Hence, data privacy is sort of trivial in some cases, but becomes an interesting problem when data privacy is more important than the correctness of computation, or when the computations are not restricted to unitaries, or when the type of input states is restricted (say real product inputs only). The QHE Schemes~\ref{schsub} and \ref{schmain} in this paper satisfy that the correctness of evaluation in general implies both data privacy and circuit privacy, and since arbitrary measurements can be done at the end, the computations here are not limited to unitaries. (The expression ``in general'' means excluding some classes of computations in which the output contains all information about the circuit, e.g. the special case in Scheme~\ref{schsub} where the polynomial has only one term.)  As far as we know, this phenomenon of ``correctness implies privacy'' is not previously known for non-unitary computations, and is thus a strong point of our schemes.

Denote the Pauli operators by $\Xgate=\sigma_x=\left(\begin{matrix} 0 & 1 \\ 1 & 0 \end{matrix}\right)$, $\sigma_y=\left(\begin{matrix} 0 & -i \\ i & 0 \end{matrix}\right)$, $\Zgate=\sigma_z=\left(\begin{matrix} 1 & 0 \\ 0 & -1 \end{matrix}\right)$, $\igate=\left(\begin{matrix} 1 & 0 \\ 0 & 1 \end{matrix}\right)$. The set of Clifford gates is defined as the set of unitaries $U$ satisfying $U P U^\dag=\pm P'$, where $P$ and $P'$ are tensor products of Pauli operators. The $n$-qubit Clifford gates form a group called Clifford group. Since we ignore the global phase here, the size of the $n$-qubit Clifford group is $\frac{1}{8}$ of an expression given in \cite{CRSS98}. In particular, the one-qubit Clifford group has $24$ elements. Let the rotations of angle $\theta$ about the three axes of the Bloch sphere be $\R_y(\theta)=\left(\begin{matrix} \cos\frac{\theta}{2} & -\sin\frac{\theta}{2} \\ \sin\frac{\theta}{2} & \cos\frac{\theta}{2} \end{matrix}\right)$, $\R_z(\theta)=\left(\begin{matrix} 1 & 0 \\ 0 & e^{i\theta} \end{matrix}\right)$, and $\R_x(\theta)=\left(\begin{matrix} \cos\frac{\theta}{2} & -i\sin\frac{\theta}{2} \\ -i\sin\frac{\theta}{2} & \cos\frac{\theta}{2} \end{matrix}\right)$. Let $\tgate=\left(\begin{matrix} 1 & 0 \\ 0 & e^{i\frac{\pi}{4}} \end{matrix}\right)$, $\hgate=\frac{1}{\sqrt{2}}\left(\begin{matrix} 1 & 1 \\ 1 & -1\end{matrix}\right)$, and $\pgate=\left(\begin{matrix} 1 & 0 \\ 0 & i\end{matrix}\right)$. Any one-qubit Clifford gate (ignoring global phase, same below) can be decomposed using products of $\R_y(\frac{\pi}{2})$ (proportional to $e^{i\frac{\pi}{4}\sigma_y}$) and $\R_z(\frac{\pi}{2})$ (proportional to $e^{i\frac{\pi}{4}\sigma_z}$) in different sequences. The $n$-qubit Clifford gates can be decomposed into products of one-qubit Clifford gates and the $\cnot$ (controlled-$\sigma_x$) gate. The Clifford gates and the $\tgate$ gate together form a universal set of gates for quantum computing. For more about universal sets of gates, see \cite{Shi03}. In this paper, the term ``EPR pair'' refers to two qubits in the state $\frac{1}{\sqrt{2}}\left(\ket{00}+\ket{11}\right)$. Let $\ket{+}$ denote $\frac{1}{\sqrt{2}}(\ket{0}+\ket{1})$, and $\ket{-}$ denote $\frac{1}{\sqrt{2}}(\ket{0}-\ket{1})$.

We shall adopt the rebit quantum computation formalism in \cite{Rudolph02} in our schemes. (See \cite{MMG09} for more about simulating the usual quantum circuit or Hamiltonian within the rebit formalism.) States of complex amplitudes can be represented by states of real amplitudes with only one extra qubit. Let there be an ancillary qubit with an orthonormal basis $\{\ket{R},\ket{I}\}$. The following non-unitary encoding is used in \cite{Rudolph02}:
\begin{eqnarray}\label{eq:rebit}
%&(a+b i)\ket{0}+(c+ d i)\ket{1} \longrightarrow\notag\\
%&a\ket{0}\ket{R}+b\ket{0}\ket{I}+c\ket{1}\ket{R}+d\ket{1}\ket{I},
(a+b i)\ket{0}+(c+ d i)\ket{1} \longrightarrow a\ket{0}\ket{R}+b\ket{0}\ket{I}+c\ket{1}\ket{R}+d\ket{1}\ket{I},
\end{eqnarray}
where $a,b,c,d$ are real numbers. When there are $n$ qubits in the initial input, only one ancillary qubit is needed. This ancillary qubit will be called the ``phase qubit'' in this paper. For Scheme~\ref{sch1} in this paper, we only need to deal with real input states, and in this case the initial encoding is trivial.

Next, we introduce how to apply gates on such representation. A unitary rotation $\R_z(\theta)$ in the $Z$ basis of an original qubit is mapped to a controlled-$\R_y(2\theta)$ on the real (i.e. original) qubit and the phase qubit, with the real qubit as the control. A unitary rotation in the $Y$ basis of an original qubit is still the same gate on the original qubit, and no action is needed on the phase qubit. An entangling two-qubit gate $F(\frac{\pi}{2})$ [which is controlled-$\R_y(\pi)$] is implemented by the same gate on the original qubits, and no action is needed on the phase qubit. These gates are all real (meaning that their matrix elements are real), and they can generate all possible unitary operations on any number of qubits, hence quantum computation in this particular representation can be done using only real gates. Let us denote three gates (on the logical qubit) $\R_z(\frac{\pi}{2})$,  $\R_y(\frac{\pi}{2})$ and $F(\frac{\pi}{2})$ as forming the gate set $S_0$. These three gates can generate the Clifford group on any number of logical qubits, and each of them can be implemented using real gates in the rebit formalism.

In the following paragraphs, we show how an unknown logical one-qubit Clifford gate can be implemented up to a phase by just performing the one-qubit gates in $S_0$ and the uncertain gates in the set $\{\igate,\R_y(\frac{\pi}{2}),\sigma_y,\R_y(3\frac{\pi}{2})\}$. We shall call the latter gate the uncertain $\R_y(\frac{k\pi}{2})$ gate.  The uncertain $\R_y(\frac{k\pi}{2})$ gate on Alice's side is generated using the gadget shown in Fig.~\ref{fig:Ygadget} which initially contains an EPR pair shared by Alice and Bob. Alice performs a controlled-$i\sigma_y$ gate with her part of the EPR pair as the control and the data qubit (assumed to be real) as the target. Then she performs a $\R_y(\frac{\pi}{2})$ gate on her qubit in the EPR pair and measures it in the $Z$ basis (with basis states being $\ket{0}$ or $\ket{1}$). Bob performs a $\R_y(\frac{j\pi}{2})$ gate and does a measurement in the $Z$ basis, where $j\in\{0,1\}$. The choice of $j$ determines the gate applied on Alice's qubit up to a correction $\R_y(\pi)=-i\sigma_y$ and a phase. Such correction can be commuted through the Clifford gates performed later with some Pauli operators as corrections.

Inspired by an $\hgate$-gate gadget construction in \cite{Broadbent18}, we give a construction using $\R_z(\frac{\pi}{2})$ and $\R_y(\frac{j\pi}{2})$ for some integers $j$ to implement one of the four gates $\R_z(\frac{k\pi}{2})$, where $k\in\{0,1,2,3\}$. We have
\begin{eqnarray}
\R_x(\frac{\pi}{2})&=&e^{-i\pi/4}\R_y(\frac{\pi}{2})\R_z(\frac{\pi}{2})\R_y(-\frac{\pi}{2}),\\
%\R_x(\frac{\pi}{2})&=&\R_z(-\frac{\pi}{2})\R_y(\frac{\pi}{2})\R_z(\frac{\pi}{2}),\\
%\R_z(-\frac{k\pi}{2})&=&e^{i\gamma_k}\R_x(-\frac{\pi}{2})\R_y(\frac{k\pi}{2})\R_x(\frac{\pi}{2}),\notag\\
%&&{\rm for}\quad k=0,1,2,3,
\R_z(-\frac{k\pi}{2})&=&e^{i\gamma_k}\R_x(-\frac{\pi}{2})\R_y(\frac{k\pi}{2})\R_x(\frac{\pi}{2}),\quad {\rm for}\quad k=0,1,2,3,
\end{eqnarray}
where $\gamma_k$ are real constants, and $\R_x(\frac{\pi}{2})=\frac{1}{\sqrt{2}}\left(\begin{matrix} 1 & -i \\ -i & 1 \end{matrix}\right)$. Noticing that $\R_x(-\frac{\pi}{2})=-\left[\R_x(\frac{\pi}{2})\right]^3$, the uncertain $\R_z(\frac{k\pi}{2})$ gate can be implemented using the uncertain $\R_y(\frac{k\pi}{2})$ gate and the gate $\R_z(\frac{\pi}{2})$ which is in $S_0$. Since $\R_z(\frac{\pi}{2})$ and $\R_y(\frac{\pi}{2})$ generate the single-qubit Clifford group, the two types of uncertain gates generate the same group.

We also notice that a $\tgate_y=\R_y(\frac{\pi}{4})$ gate can be implemented using the same gadget. Alice's action is still the same as before. Bob does some gate $\R_y(\frac{j\pi}{4})$ with $j=1$ or $3$ before doing a $Z$-basis measurement. The choice of $j$ depends on Alice and Bob's previous measurement results. All previous gadgets and fixed gates give rise to some Pauli operators known to Bob (which are to be corrected by Alice later), and these Pauli operators cause some Pauli operator $\Zgate$ on Bob's qubit in the gadget, which is corrected by Bob's choice of gate before measurement in the gadget. Generally, Bob may implement a $\R_y(\theta)$ gate for any real $\theta$, by suitably choosing his gate before measurement in the gadget. We omit the details here, since this is a special case of the more general multi-qubit $Y$-diagonal gate in Scheme~\ref{sch1} below.

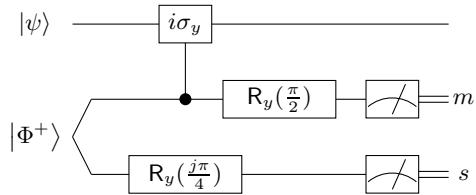
\begin{figure}[h]
\centering
\begin{tikzpicture}
\node at (-0.5,5) {$\ket{\psi}$};
\draw (0,5)--(5,5);
\node at (-0.5,3.5) {$\ket{\Phi^+}$};
\draw (0,3.5)--(.25,4);
\draw (0,3.5)--(.25,3);
\draw (0.25,4)--(4.25,4);
\draw (0.25,3)--(4.25,3);

\filldraw[fill=white] (1.15,5.25) rectangle (1.85,4.75);
\node at (1.5,5) {$i\sigma_y$};
\node at (1.5,4) {\cntrl};
\draw (1.5,4.75)--(1.5,4);

\filldraw[fill=white] (2.0,4.25) rectangle (3.5,3.75);
\node at (2.75,4) {$\R_y(\frac{\pi}{2})$};

\node at (4.25,4) {\meas};
\draw (4.6,4.04)--(5,4.04);
\draw (4.6,3.96)--(5,3.96);
\node at (5.2,4) {$m$};

\filldraw[fill=white] (0.75,3.25) rectangle (2.25,2.75);
\node at (1.5,3) {$\R_y(\frac{j\pi}{4})$};

\node at (4.25,3) {\meas};
\draw (4.6,3.04)--(5,3.04);
\draw (4.6,2.96)--(5,2.96);
\node at (5.2,3) {$s$};

\end{tikzpicture}
\caption{The gadget for implementing Y-axis rotations $\R_y(\frac{k\pi}{2})$ for integer $k$, or the $\tgate_y=\R_y(\frac{\pi}{4})$ gate on the first qubit up to a possible correction $\R_y(\pi)=-i\sigma_y$ and a possible phase depending on $j$ and the measurement results. The entangled state $\ket{\Phi^{+}}=\frac{1}{\sqrt{2}}\left(\ket{00}+\ket{11}\right)$. For implementing $\R_y(\frac{k\pi}{2})$ with integer $k$, the integer $j$ is $0$ or $2$, dependent only on the desired gate; for implementing the $\tgate_y$, the integer $j$ is $1$ or $3$, which depends on Alice's measurement outcome and the previous gates and measurement outcomes in the circuit. In Scheme~\ref{sch1}, the last qubit is on Bob's side and the other two qubits are on Alice's side.}\label{fig:Ygadget}
\end{figure}

\section{Two schemes based on rebits}\label{sec4}

{\bf (1) About Scheme~\ref{sch1}.}
The Scheme~\ref{sch1} is for a restricted type of circuits with restricted types of input states. The scheme uses the rebit quantum computation techniques in \cite{Rudolph02}. In the following, we denote the number of input data qubits as $n$.

The input state for Scheme~\ref{sch1} can be a product real state, but it can also be slightly more general: a product state of  one-qubit real states on $n-1$ qubits with a logical qubit state encoded in the rebit form \eqref{eq:rebit}, where the physical qubit for that exceptional logical qubit is the first data qubit. The output of Scheme~\ref{sch1} is of the form of \eqref{eq:rebit}. If the required output of Scheme~\ref{sch1} is classical, we only need to measure the qubits except the phase qubit to get the outcome. There might be other applications in which there are both real and imaginary parts in the output state, and the wanted part is real. Then we may still have some probability of success: do a measurement of the phase qubit in the $\{\ket{R},\ket{I}\}$ basis, and if the outcome corresponds to $\ket{R}$, we then measure the remaining qubits to get the result. If an output state with complex amplitudes is required, we may measure the phase qubit in the $\{(\ket{R}+\ket{I})/\sqrt{2},(\ket{R}-\ket{I})/\sqrt{2}\}$ basis, and if the result is the first one, we know the output state is correct.

The Scheme~\ref{sch1} allows continuous families of multi-qubit gates to be performed using relatively few gadgets. Bob is required to be able to perform more general operations than single-qubit measurements. The ability to perform two-qubit gates and single-qubit measurements is enough, but for better efficiency, he should do gates on more than two qubits.

Each layer of the circuit in the scheme implements some $Y$-diagonal unitary. Note that $Y$-diagonal unitaries on multiple qubits are not necessarily real, and the scheme is limited to implementing those real $Y$-diagonal unitaries. This is to guarantee the satisfaction of the requirement in rebit computation that the physical states are real. A class of examples of a three-qubit real $Y$-diagonal gate is $U=\cos(\theta)\igate\ox\igate\ox\igate+\sin(\theta)\R_y(\pi)\ox\R_y(\pi)\ox\R_y(\pi)$, where $\theta$ is real.

The implementation of a $Y$-diagonal unitary uses the group-based protocol of remote implementation of unitary operators in \cite{LW13}. Such protocol is derived from the protocol for implementation of double-group unitaries on two parties \cite{ygc10}. The double-group unitary protocol is a bipartite version of a group-based local quantum computing scheme \cite{KR03}, with additional analysis in the case of projective representations. In the current work, we only deal with the simple case of ordinary representation of a group, and the group is Abelian: it is the $K$-fold direct product of the $C_2$ group (the group with two elements), denoted as $C_2^K$. In Scheme~\ref{sch1}, $K$ is the number of qubits involved in the logical $Y$-diagonal unitary. The paper \cite{LW13} contains the detailed steps for the remote unitary protocol for the case of ordinary representation of a group. In Scheme~\ref{sch1}, all $K$ qubits are on Alice's side. The unitary operators used in the current case are $\bigotimes_{i=1}^K \sigma_{y,i}^{a_i}$, where $a_i\in\{0,1\}$, and the subscript $i$ means the gate acts on qubit $i$. The target unitary $U$ (on $K$ data qubits) is diagonal in the $Y$ axis, so it can be expanded using such a set of unitary operators:
\begin{eqnarray}\label{eq:Ucf}
U=\sum_f c(f) V(f),
\end{eqnarray}
where $f$ is a group element in $C_2^K$, and can be represented using a binary string of length $K$, consisting of the bits $a_i$: $f=\left(a_1,\dots,a_K\right)$. The unitary $V(f)$ is of the form $\bigotimes_{i=1}^K \sigma_{y,i}^{a_i}$. The coefficients $c_f$ appear in the unitary matrix $C$ in the remote unitary scheme. From the equation (45) in \cite{ygc10}, the matrix $C$ is defined using (ignoring phases since we are dealing with ordinary representations here)
\begin{eqnarray}\label{eq:ci}
C_{g,f}=c(g^{-1}f),
\end{eqnarray}
where $g^{-1}$ and $f$ are group elements, and their product is the group multiplication. When each element is represented using a binary string of length $K$ as mentioned above, the group multiplication is just the addition modulo 2 on each position of the vector.  The corrections are $V(g)^\dag$. In the current case, it amounts to $\igate$ or $\sigma_y$ [equivalent to $\R_y(\pi)$ up to a global phase] on each qubit. It is shown in \cite{KR03} that when $U$ is unitary and that the $\{V(f)\}$ is an ordinary representation, it is guaranteed that there exists at least one $C$ matrix that is unitary. Further, if the $\{V(f)\}$ is a linearly independent set (which is true for the current case), then $C$ is unique. These statements are extended in \cite{ygc10} to the case of projective representations.

Some note about a particular step in the scheme: Bob's $\pgate$ or $\pgate^\dag$ gates on his qubit in each gadget are the product of two gates: one of them is $\Zgate$ or $\igate$, which is a correction according to Alice's measurement outcome, and the other is a fixed $\pgate^\dag$ gate to correct for the phase $i$ in the controlled-$i\sigma_y$ gate in the gadget. Bob's such gates can also be absorbed into his unitary $C$.

The entanglement and classical communication costs in Scheme~\ref{sch1} scale linearly with the product of the input size and the circuit depth. For data privacy in Scheme~\ref{sch1}, we have the following theorem, with the proof in Appendix~\ref{appd1}.
\begin{theorem}\label{thm1}
When Scheme~\ref{sch1} is used for almost-commuting circuits with input state of the type specified in the scheme, the input data is perfectly secure.
\end{theorem}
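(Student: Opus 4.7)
The plan is to show that at every stage of the protocol Bob's total view—his quantum registers together with every classical message received from Alice—has a marginal density operator that is identical for every allowed input $\ket{\psi}$. Once this is established, perfect data privacy follows immediately, since Bob's distinguishing power over the inputs is by definition the distinguishability of his marginal state.

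I would first isolate a single gadget, which implements one layer of the circuit: a real $Y$-diagonal unitary $U=\sum_f c(f)V(f)$ on $K$ logical data qubits, where $V(f)=\bigotimes_{i=1}^{K}\sigma_{y,i}^{a_i}$ for $f=(a_1,\ldots,a_K)\in C_2^K$. Tracking the joint state through Alice's controlled-$i\sigma_y$ operations, her Hadamard-type rotation on the $A'$ halves of the shared EPR pairs, and her $Z$-basis measurement with outcome $g$, one finds that each outcome $g$ occurs with probability $2^{-K}$ independent of $\ket{\psi}$, and that the post-measurement state on $DB$ is proportional to $\sum_f (-1)^{f\cdot g}V(f)\ket{\psi}_D\ket{f}_B$. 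Tracing out the data register $D$ yields
\begin{equation*}
\rho_B^{(g)} \;=\; \frac{1}{2^K}\sum_{f,f'}(-1)^{(f+f')\cdot g}\,\langle\psi|V(f+f')|\psi\rangle\,\ket{f}\bra{f'}_B .
\end{equation*}
For the specified inputs (a product of real one-qubit states, with the harmless rebit extension for the first qubit), the expectation factorises as $\prod_i\langle\psi_i|\sigma_y^{h_i}|\psi_i\rangle$, and every factor with $h_i=1$ vanishes because $\sigma_y$ is purely imaginary and Hermitian, forcing $\langle\psi_i|\sigma_y|\psi_i\rangle=0$ on any real one-qubit state. Only the $h=0$ term survives, so $\rho_B^{(g)}=2^{-K}\igate$ is maximally mixed and completely input-independent. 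Combined with the uniformity of $g$, this establishes that after a single gadget Bob's view—classical message plus quantum register—is statistically identical for all admissible inputs; any further local action by Bob ($C$-matrix rotation, his own $Z$-basis measurement, transmission of corrections) is then mere data processing of input-independent randomness.

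To lift the single-gadget statement to the full scheme I would induct on the layer index, exploiting the fact that every real $Y$-diagonal unitary $U_k$ and every Pauli byproduct $V(a)$ accumulated from earlier gadgets commutes with every $V(h)$, since all these operators are diagonal in the common eigenbasis of the single-qubit $\sigma_{y,i}$'s. Consequently, the expectation entering Bob's reduced state at the $k$-th gadget collapses, through repeated commutation, to $\langle\psi|V(h)|\psi\rangle$ evaluated on the \emph{initial} product real input, and therefore still vanishes for $h\neq 0$. The maximally-mixed, uniform conclusion therefore propagates through every layer and, by composition, over the whole scheme. The ``almost-commuting circuits'' hypothesis is precisely what should license this inductive use of $[U_k,V(h)]=[V(a),V(h)]=0$. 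The main technical obstacle I anticipate is the careful bookkeeping of side channels: one must check that the interleaved local $S_0$ Clifford gates on Alice's side do not rotate the effective data state out of the form required for the commutation argument, and that every classical message Bob later sends back to Alice (to instruct her Pauli corrections) is computed from measurements on registers that are already input-independent. Once this accounting is in place, perfect data privacy follows.
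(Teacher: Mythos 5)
Your overall strategy --- show that Bob's marginal (quantum registers plus Alice's classical messages) is input-independent --- is the same as the paper's, and your single-layer computation is correct and in fact a cleaner, more explicit version of what the paper does: the measurement outcomes are uniform, and the off-diagonal entries of Bob's reduced state are proportional to $\langle\psi|V(f+f')|\psi\rangle$, which vanishes for $f\neq f'$ on a real product state because $\langle\phi|\sigma_y|\phi\rangle=0$ for any real one-qubit $\ket{\phi}$.

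The gap is in the inductive step, and it sits exactly where the paper's proof does its real work. You justify the induction by asserting that every operator in sight ($U_k$, the byproducts $V(a)$, and the $V(h)$) is diagonal in the common $\sigma_y$ eigenbasis, so everything commutes. But the allowed circuits are only \emph{almost} commuting: they are interspersed with logical $\R_z(\frac{j\pi}{2})$ gates on the first data qubit, which in the physical rebit circuit are controlled-$\R_y(j\pi)$ gates coupling the first data qubit to the phase qubit. These do not commute with $\sigma_y$ on the first data qubit; conjugating $\sigma_y\ox\igate$ by such a gate produces (up to phase) $\sigma_x\ox\sigma_y$ on the first-data-qubit/phase-qubit pair, so the accumulated operator $W(f,f')$ no longer collapses to a $V(h)$ on the initial input, and the phase qubit enters the expectation value. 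This matters because the allowed inputs include an arbitrary real \emph{entangled} two-qubit state on the first data qubit and the phase qubit (the rebit encoding of a complex logical qubit), and a general real two-qubit operator (e.g.\ $\sigma_y\ox\sigma_y$) does \emph{not} have vanishing expectation on such states. The paper's proof spends most of its length precisely on this point: it commutes the mask operators through the circuit, cancels them in pairs on each line, and checks that the surviving joint operator on the first data qubit and the phase qubit is always of the form $\Xgate\ox\R_y(\pi)$ or $\Zgate\ox\R_y(\pi)$ --- purely imaginary Hermitian or real antisymmetric --- so that its expectation on any real state vanishes. You flag this bookkeeping as ``the main technical obstacle I anticipate'' but do not carry it out, so the proof is incomplete for any circuit containing the interspersed $\R_z$ gates and, in particular, for the rebit-encoded complex input on the first logical qubit. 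A second, smaller difference: the paper does not conclude that Bob's state is maximally mixed directly; it shows invariance of Bob's state under $\R_y(\pi)$ pads of the input and then averages over pads to get a fixed state. Your direct route is fine for the purely $Y$-diagonal layers, but whichever route you take, the case analysis of the operators that land on the first data qubit and the phase qubit must be supplied.
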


A statement about the circuit privacy in Scheme~\ref{sch1} is in Theorem \ref{thm4} below. The circuit privacy is already optimal up to a constant factor, since Alice can always learn $O(n)$ bits of information about the circuit by looking at her output.

\begin{algorithm*}[htb]
\caption{QHE for almost-commuting circuits with product real input}\label{sch1}
\begin{flushleft}
{\bf The type of allowed circuits:} an \emph{almost-commuting circuit}. The logical circuit contains one or more real $Y$-diagonal unitaries which act on subsets of data qubits, interspersed with $\R_z(\frac{j\pi}{2})$  ($j=1,3$) on the first data qubit only. This implies that in the physical circuit, there may be multiple uses of the gate gadget on each data qubit, and the only other gates are the two-qubit controlled-$\R_y(j\pi)$ on only one of the data qubits and the phase qubit.\\
{\bf The type of allowed input states:} a product real state, or with the first data qubit replaced by a complex qubit encoded in the rebit form \eqref{eq:rebit}.\\
\end{flushleft}
\begin{enumerate}
\item Alice and Bob initially share some EPR pairs. Alice prepares the classical input state in the data qubits. Alice uses an ancillary qubit to be used as the phase qubit.  It is assumed that both parties know the general form and the fixed one-qubit gates in the desired \emph{logical} circuit, except that the details of the $Y$-diagonal unitaries are known only to Bob.

\item Alice implements the fixed gates and her part of the gate gadgets. The \emph{logical} $\R_z(\frac{j\pi}{2})$ ($j=1,3$) gates are implemented using the controlled-$\R_y(j\pi)$ gate on the real data qubit and her phase qubit. For a real $Y$-diagonal unitary acting on one or more qubits, it is implemented with the help of some gate gadgets (each on one data qubit) similar to the one in Fig.~\ref{fig:Ygadget} but with Bob's operations modified. Each gadget initially contains an EPR pair of qubits: $a$ on Alice's side, and $b$ on Bob's side. Alice performs a local controlled-$i\sigma_y$ gate with the qubit $a$ as the control, and the real data qubit as the target. She then does a $\R_y(\frac{\pi}{2})$ gate followed by a $Z$-basis measurement on the qubit $a$. She sends Bob all available measurement outcomes.

\item Bob receives Alice's message. He maintains a list of Pauli corrections for the data qubits and Alice's phase qubit, and initially they are all $\igate$. For each $Y$-diagonal unitary which may act on multiple qubits, he does the following: for each gadget, if Alice's measurement outcome in this gadget corresponds to $\ket{0}$, he does a $\pgate$ gate on his qubit in the gadget; otherwise he does a $\pgate^\dag$ gate on this qubit. When there are $\Zgate$ or $\Xgate$ in his list of Pauli corrections, he also does $\Zgate$ gates on the relevant qubits in the gadgets. He then performs a unitary transform $C$ given in Eq.~\eqref{eq:ci} which acts jointly on all his qubits in the relevant gate gadgets. He then does a measurement in the $Z$ basis on each of the transformed qubits. According to the measurement results, he updates his list of Pauli corrections [$\R_y(\pi)$ for each measurement outcome $1$]. He updates the list when passing each fixed gate or later $Y$-diagonal unitaries in the circuit. When he reaches the end of the circuit, he sends the required part of the list of Pauli corrections to Alice.

\item Alice applies Pauli corrections on the data qubits and the phase qubit (the correction on the phase qubit may be omitted depending on applications). The output state is in the form of \eqref{eq:rebit}.
\end{enumerate}
\end{algorithm*}

We note that there is a simplification of the scheme, using a method from \cite{Lai17}: for the last $n-1$ data qubits, Alice could choose to apply possible $\R_y(\pi)$ masks on them and send them to Bob before the protocol starts. In this way, the ancillary qubits for those data qubits can be saved. At the end of the protocol, Bob sends these qubits back, and Alice can recover this part of the output by undoing the $\R_y(\pi)$ masks. An alternative way of simplification is by combining some gate gadgets so that each of the last $n-1$ data qubits uses one gate gadget only.

{\bf (2) About Scheme~\ref{sch2}.} Based on Scheme~\ref{sch1}, we propose the following Scheme~\ref{sch2} which works for a large class of polynomial-depth circuits. Its data privacy is only partial. Its circuit privacy is similar to that in Scheme~\ref{sch1}.

\begin{algorithm*}[htb]
\caption{QHE for a large class of circuits with real input and partial data privacy}\label{sch2}
\begin{flushleft}
{\bf The type of allowed circuits:} the logical circuit contains one or more real $Y$-diagonal unitaries which act on all $n$ data qubits, interspersed with $\R_z(\frac{j\pi}{2})$  ($j=1,3$) on all of the data qubits. In the physical circuit, there are multiple uses of the gate gadget on each data qubit, interspersed with two-qubit controlled-$\R_y(j\pi)$ gates between a data qubit (at arbitrary position) and the phase qubit.\\
\noindent{\bf The type of allowed input states:} real input states, or complex input states encoded in the rebit form of  \eqref{eq:rebit}.\\
\end{flushleft}
\begin{enumerate}
\item Alice and Bob initially share some EPR pairs. Alice prepares the classical input state in the data qubits. Alice has an ancillary qubit to be used the phase qubit. The desired circuit is decomposed using real $Y$-diagonal unitaries and the \emph{logical} $\R_z(\frac{j\pi}{2})$ ($j=1,3$) gates. It is assumed that both parties know the general form and the fixed one-qubit gates in the desired circuit, except that the details of the $Y$-diagonal unitaries are known only to Bob.

\item Alice implements the fixed gates and her part of the gate gadgets. The \emph{logical} $\R_z(\frac{j\pi}{2})$ ($j=1,3$) gates are implemented using the controlled-$\R_y(j\pi)$ gate on the real data qubit and her phase qubit. For a real $Y$-diagonal unitary acting on all data qubits, it is implemented with the help of some gate gadgets (each on one data qubit) similar to the one in Fig.~\ref{fig:Ygadget} but with Bob's operations modified. Each gadget initially contains an EPR pair of qubits: $a$ on Alice's side, and $b$ on Bob's side. Alice performs a local controlled-$i\sigma_y$ gate with the qubit $a$ as the control, and the real data qubit as the target. She then does a $\R_y(\frac{\pi}{2})$ gate followed by a $Z$-basis measurement on the qubit $a$. She sends Bob all available measurement outcomes.

\item Bob receives Alice's message. He maintains a list of Pauli corrections for the data qubits and Alice's phase qubit, and initially they are all $\igate$. For each $Y$-diagonal unitary which may act on multiple qubits, he does the following: for each gadget, if Alice's measurement outcome in this gadget corresponds to $\ket{0}$, he does a $\pgate$ gate on his qubit in the gadget; otherwise he does a $\pgate^\dag$ gate on this qubit. When there are $\Zgate$ or $\Xgate$ in his list of Pauli corrections, he also does $\Zgate$ gates on the relevant qubits in the gadgets. He then performs a unitary transform $C$ given in Eq.~\eqref{eq:ci} which acts jointly on all his qubits in the relevant gate gadgets. He then does a measurement in the $Z$ basis on each of the transformed qubits. According to the measurement results, he updates his list of Pauli corrections [$\R_y(\pi)$ for each measurement outcome $1$]. He updates the list when passing each fixed gate or later $Y$-diagonal unitaries in the circuit. When he reaches the end of the circuit, he sends the required part of the list of Pauli corrections to Alice.

\item Alice applies Pauli corrections on the data qubits and the phase qubit (the correction on the phase qubit may be omitted depending on applications). The output state is in the form of \eqref{eq:rebit}.
\end{enumerate}
\end{algorithm*}

The entanglement and classical communication costs in the two schemes scale linearly with the product of the input size and the circuit depth. The circuit in Scheme~\ref{sch2} only consists of products of real $Y$-diagonal unitaries performed by Bob and some fixed type of unitaries performed by Alice. Numerical evidence suggests that whenever $n>1$  (where $n$ is the number of data qubits), the type of the allowed circuits at the logical level is not arbitrary. This is confirmed by Theorem~\ref{thm3} below. But the scheme at least allows a large class of unitary circuits. The data privacy in Scheme~\ref{sch2} is partial. From the proof of Theorem~\ref{thm1}, it can be found that the two input states of all $\ket{0}$ and all $\ket{1}$ on the data qubits (with the phase qubit in the same initial state $\ket{0}$) cannot be distinguished by Bob. More generally, we have the following theorem, with the proof in Appendix~\ref{appd1}.
\begin{theorem}\label{thm2}
(i) Suppose two input states in Scheme~\ref{sch2} satisfy that each is a product state across the bipartition between the $n$ data qubits and the phase qubit, and they are related to each other by a $\R_y(\pi)$ on each data qubit and a $\R_y(\theta)$ on the phase qubit, where $\theta$ is any real number, then Bob cannot distinguish these two input states at all.\\
(ii) In Scheme~\ref{sch2}, if $n$ is odd, the following logical qubit is perfectly secure: it is composed by encoding $\ket{0}$ and $\ket{1}$ into $\ket{0}^{\ox n}$ and $\ket{1}^{\ox n}$, respectively, on the data qubits, and the phase qubit is that in Scheme~\ref{sch2}. The first $n$ qubits of the two encoded states could also be any computational-basis state and $\Xgate^{\ox n}$ acting on such state.
\end{theorem}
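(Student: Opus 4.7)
The plan is to track, at each step of the protocol, Bob's reduced density matrix and to show it is invariant under the mask $M := \R_y(\pi)^{\otimes n}_D \otimes \R_y(\theta)_P$. The crucial structural fact I would lean on is that every operation Alice performs in Scheme~\ref{sch2} (the $C\text{-}\R_y(j\pi)$ used for $\R_z$, the $C\text{-}i\sigma_y$, the $\R_y(\pi/2)$, and the $\R_y(\pi)$ byproduct corrections) has a \emph{real} matrix in the computational basis, as does $M$ itself. Since the input is assumed to be a product state that is valid for the rebit scheme (so $|\psi\rangle_D$ and $|\phi\rangle_P$ may be taken real), the joint state on $(D,P)$ stays real throughout, both in the original and in the masked execution. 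This reality will be the lever that kills all the ``problematic'' terms.

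Next I would compute Bob's reduced state in a single $Y$-gadget round. Conditioning on Alice's outcome $s$, a direct calculation from the EPR structure gives something of the form
\begin{equation*}
\rho_B^{(s)} \;=\; \frac{1}{2^n}\sum_{a,a'} (-1)^{(a+a')\cdot(\bar s)}\, i^{|a|}(-i)^{|a'|}\,\langle\chi|\sigma_y^{a+a'}\otimes I_P|\chi\rangle\,|a\rangle\langle a'|_B,
\end{equation*}
so the only input-dependence enters through the numbers $\langle\chi|\sigma_y^{c}\otimes I_P|\chi\rangle$, $c\in\{0,1\}^n$. For real $|\chi\rangle$ these vanish whenever $w(c)$ is odd (since $\sigma_y^{c}$ is then purely imaginary while its expectation must be real) and are real otherwise. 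The mask $M$ manifestly commutes with $\sigma_y^{c}\otimes I_P$ for every $c$, because $\R_y(\pi)\propto \sigma_y$ commutes with $\sigma_y$ and $\R_y(\theta)$ commutes with $I_P$. Hence replacing $|\chi\rangle$ by $M|\chi\rangle$ leaves every matrix element of $\rho_B^{(s)}$ unchanged, and so the distribution of Bob's measurement outcomes in the round is identical in the two executions; Alice's outcomes are already uniform $1/2^n$ regardless.

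The main obstacle is that after an intermediate $\R_z$-layer $U_{R_z}$ the two states are no longer related by $M$; rather the ``effective mask'' $U_{R_z}MU_{R_z}^\dagger$ is not of the form $\R_y(\pi)^{\otimes n}\otimes \R_y(\theta')_P$. I would circumvent this by working directly with the full cumulative Alice-side unitary $W$ (interleaved $R_z$-layers and byproduct-plus-target $V_{s,r}$'s from previous gadgets, all real), and examining $W^\dagger(\sigma_y^{c}\otimes I_P)W$ evaluated on the \emph{real} initial state. Again, if $w(c)$ is odd, $W^\dagger \sigma_y^c W$ is purely imaginary as an operator (real $W$ preserves the real/imaginary parity of $\sigma_y^{c}$) and antisymmetric, so its expectation vanishes on any real state. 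If $w(c)$ is even, I would show by induction on the $C\text{-}\R_y(j\pi)$ gates (using the identity $\R_y(2\pi k)=(-1)^k I$ to collapse the $x$-dependent phases on $P$) that the resulting operator is, up to an overall sign, of the form $\sigma_x^{c'}\otimes I_P$ acting on an even-weight subset of data qubits. Such an operator commutes with $\R_y(\pi)^{\otimes n}$ (because each single-qubit anticommutation $\R_y(\pi)\sigma_x=-\sigma_x \R_y(\pi)$ is paired off by the even weight) and with $\R_y(\theta)_P$, hence with $M$. Therefore $\langle M\chi_0|W^\dagger\sigma_y^c W|M\chi_0\rangle = \langle\chi_0|W^\dagger\sigma_y^c W|\chi_0\rangle$, and Bob's reduced state at every $Y$-gadget round is invariant under $M$. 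Putting the two together proves part~(i).

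For part~(ii) I would simply pull back on part~(i) by observing the algebraic identity $\R_y(\pi)^{\otimes n}|x\rangle=(-1)^{w(x)}|\bar x\rangle$. With $n$ odd, applying this to the logical encoding $|0\rangle_L \mapsto |0\rangle^{\otimes n}$, $|1\rangle_L \mapsto |1\rangle^{\otimes n}$ sends $|0\rangle^{\otimes n}$ to $|1\rangle^{\otimes n}$ and $|1\rangle^{\otimes n}$ to $(-1)^n|0\rangle^{\otimes n}=-|0\rangle^{\otimes n}$, i.e., the two logical basis encodings are related by $\R_y(\pi)^{\otimes n}$ up to an irrelevant global phase. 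Superpositions $\alpha|0\rangle^{\otimes n}+\beta|1\rangle^{\otimes n}$ and $\alpha|1\rangle^{\otimes n}-\beta|0\rangle^{\otimes n}$ are therefore related by the mask of part~(i) (with a suitable $\R_y(\theta)$ on the phase qubit to absorb the induced logical phase), and hence indistinguishable to Bob. The same identity $\R_y(\pi)^{\otimes n}|x\rangle=\pm|\bar x\rangle=\pm \Xgate^{\otimes n}|x\rangle$ immediately extends this to an arbitrary computational-basis state and its $\Xgate^{\otimes n}$-image, concluding part~(ii).
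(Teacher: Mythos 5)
Your part~(i) argument is built on the same two levers as the paper's proof (reality of the state and of Alice's circuit, plus a Pauli-weight parity argument showing that the only surviving terms commute with the mask), and the single-round computation of $\rho_B^{(s)}$ is correct. But two points need repair. First, the claimed normal form $\pm\sigma_x^{c'}\ox I_P$ for $W^\dagger\sigma_y^c W$ is wrong: conjugating $\sigma_y$ on a data qubit through a controlled-$\R_y(\pi)$ gate gives $-\sigma_x\ox\sigma_y$ on the data and phase qubits, so the phase qubit generically carries a $\sigma_y$; this happens to be harmless because $\R_y(\theta)$ commutes with $\sigma_y$, but the induction as stated is false. Second, and more substantively, your argument is organized round-by-round, with the previous gadgets folded into $W$ as definite teleported gates $V_{s,r}$; this conditions on Bob having measured his gadget qubits and only establishes invariance of the per-round marginals. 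The theorem's claim that Bob ``cannot distinguish these two input states at all'' concerns his \emph{joint} reduced state over all gadget qubits in all rounds, including the case where he keeps them coherent and measures jointly at the end. The paper handles this by analyzing an arbitrary off-diagonal element $\ketbra{(g_i)}{(h_i)}$ of the full joint reduced density operator via a ``difference circuit,'' showing that every element sensitive to the mask-induced $\Zgate$'s has odd total weight and hence vanishes by the reality argument. Your ideas extend to that setting, but the extension is not automatic (a product of one antisymmetric and several symmetric real factors need not have vanishing expectation on a real vector), and you have not carried it out.

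Part~(ii) has a genuine gap. You derive it entirely from part~(i), i.e., from invariance of Bob's view under the logical $\R_y(\pi)$ mask. Invariance under a single logical Pauli does not make the logical qubit \emph{perfectly} secure: for instance, the two logical $\sigma_y$-eigenstates are each fixed (up to phase) by the logical $\R_y(\pi)$, so your argument says nothing about whether Bob can distinguish them. Perfect security requires Bob's reduced state to be invariant under (enough of) the full logical Pauli group, so that a twirling/averaging argument identifies every logical input's view with that of the logically maximally mixed state. This is exactly what the paper does: its proof of (ii) goes back to the machinery of Theorem~\ref{thm1}, additionally establishing invariance under the logical $\Zgate$ mask realized as $\Zgate^{\ox n}$ (together with $\Xgate^{\ox n}$, which for odd $n$ combines with $\R_y(\pi)^{\ox n}$), and then invokes the averaging argument from the last paragraph of that proof. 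Your closing observation $\R_y(\pi)^{\ox n}\ket{x}=\pm\Xgate^{\ox n}\ket{x}$ does correctly handle the final sub-claim about pairs of computational-basis states, but the main ``perfectly secure logical qubit'' claim needs the missing $\Zgate^{\ox n}$ analysis.
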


It is interesting to note that Theorem~\ref{thm2}(ii) does not hold for the case of even $n$, even if we replace the ``logical qubit'' with ``logical rebit''. For example, when $n=2$, the two input states $\frac{1}{\sqrt{2}}(\ket{00}+\ket{11})\ket{0}$ and $\frac{1}{\sqrt{2}}(\ket{00}-\ket{11})\ket{0}$ (where the last $\ket{0}$ is the input state of the phase qubit) can be distinguished perfectly by Bob with a constant-depth circuit. This is because $\R_y(\pi)$ on an even number of qubits is not equal to the logical $\R_y(\pi)$ under the encoding in Theorem~\ref{thm2}(ii). On the other hand, the two input states $(\cos\alpha\ket{00}+\sin\alpha\ket{11})\ket{0}$ and $(\sin\alpha\ket{00}+\cos\alpha\ket{11})\ket{0}$ which are not always orthogonal to each other ($\alpha$ is real), cannot be distinguished by Bob at all, according to Theorem~\ref{thm2}(i). The considerations about data privacy leads to the following restrictions on the type of allowed circuits at the logic level.

\begin{theorem}\label{thm3}
In Scheme~\ref{sch2}, if $n>1$, the type of allowed circuits (at the logic level) does not include all circuits on $n$ qubits.
\end{theorem}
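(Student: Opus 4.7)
The plan is to exhibit a proper invariant subgroup of $U(2^n)$ that contains every logical circuit implementable by Scheme~\ref{sch2}. Let $G \subseteq U(2^n)$ denote the group generated at the logic level by the allowed building blocks: all real $\sigma_y$-diagonal unitaries on the $n$ data qubits, together with the layers $\R_z(j\pi/2)^{\otimes n}$ for $j \in \{1,3\}$. I will define
\[
H := \bigl\{U \in U(2^n) : \exists\, c \in U(1) \text{ with } U^* = c\,\sigma_y^{\otimes n}\, U\, \sigma_y^{\otimes n}\bigr\}.
\]
That $H$ is a subgroup follows from $(\sigma_y^{\otimes n})^2 = I$: if $U_k^* = c_k\, \sigma_y^{\otimes n} U_k \sigma_y^{\otimes n}$ for $k=1,2$, then $(U_1 U_2)^* = c_1 c_2\, \sigma_y^{\otimes n} U_1 U_2 \sigma_y^{\otimes n}$, and inverses satisfy the condition with scalar $c^{-1}$.

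To verify $G \subseteq H$, I would check the generators one at a time. For any real $\sigma_y$-diagonal $U$, one has $U^* = U$ and $[U,\sigma_{y,i}] = 0$ for every $i$, hence $\sigma_y^{\otimes n} U \sigma_y^{\otimes n} = U$, so $c=1$ works. For $\R_z(\pi/2)^{\otimes n}$, a one-qubit calculation gives $\sigma_y\,\R_z(\pi/2)\,\sigma_y = i\,\R_z(-\pi/2)$, and tensoring yields $\sigma_y^{\otimes n}\R_z(\pi/2)^{\otimes n}\sigma_y^{\otimes n} = i^n \R_z(-\pi/2)^{\otimes n} = i^n (\R_z(\pi/2)^{\otimes n})^*$, so $c = (-i)^n$ works; the case $j = 3$ is analogous with $c = i^n$. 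Both types of generators lie in $H$, hence $G \subseteq H$.

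It remains to exhibit $V \in U(2^n)\setminus H$ for $n \geq 2$. Take $V = \cnot_{12} \otimes I^{\otimes(n-2)}$ (control on qubit $1$, target on qubit $2$). Using the identities $\sigma_y\ket{0}\bra{0}\sigma_y = \ket{1}\bra{1}$, $\sigma_y\ket{1}\bra{1}\sigma_y = \ket{0}\bra{0}$ and $\sigma_y \sigma_x \sigma_y = -\sigma_x$, a direct block computation gives
\[
(\sigma_y\otimes\sigma_y)\,\cnot\,(\sigma_y\otimes\sigma_y) = \ket{1}\bra{1}\otimes I - \ket{0}\bra{0}\otimes \sigma_x,
\]
which is not a scalar multiple of $\cnot^* = \cnot = \ket{0}\bra{0}\otimes I + \ket{1}\bra{1}\otimes \sigma_x$, since a putative common scalar $c$ would have to reconcile incompatible sign assignments on the two operator blocks. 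Tensoring with $I^{\otimes(n-2)}$ preserves the obstruction because $\sigma_y I \sigma_y = I$, so $V \notin H$ and therefore $V \notin G$.

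The conceptual obstacle is guessing the right invariant $H$; the motivation is that every generator is either real (giving $c=1$) or becomes a scalar multiple of its complex conjugate after a $\sigma_y^{\otimes n}$ twist, and this ``twisted reality'' is closed under composition. Once $H$ is in hand, the generator checks and the $\cnot$ counterexample are routine matrix manipulations, and the theorem follows.
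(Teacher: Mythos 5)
Your proof is correct, but it takes a genuinely different route from the paper's. The paper argues via data privacy: it invokes Theorem~\ref{thm2}(i) to show that if the logical Hadamard on the repetition-encoded rebit (or a certain permutation-type unitary for odd $n$) were implementable, Bob could partially distinguish input states that the privacy theorem forces to be indistinguishable, relying on ``explicit calculations'' for the $n=2$ distinguishability claim. You instead give a self-contained algebraic argument: the ``twisted reality'' invariant $U^*=c\,\sigma_y^{\ox n}U\sigma_y^{\ox n}$ defines a closed subgroup $H$ containing every generator, while $\cnot\notin H$; your generator checks and the $\cnot$ block computation are all correct. Each approach buys something: yours is more elementary, avoids any appeal to numerics or to Theorem~\ref{thm2}, and (since $H$ is closed in $U(2^n)$ and stable under multiplication by Pauli corrections, which also lie in $H$ with $c=\pm1$) even rules out approximate implementation by products of the allowed blocks; the paper's argument, by contrast, ties the circuit restriction to the privacy guarantee and therefore excludes \emph{any} deterministic realization by the scheme's gadgets, not merely compositions of the advertised logical generators. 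One small caveat: you only verify the global layer $\R_z(j\pi/2)^{\ox n}$, whereas the scheme's physical description (``a data qubit at arbitrary position'') indicates that $\R_z(j\pi/2)$ may be applied to an individual data qubit; the same one-qubit identity $\sigma_y\R_z(\pi/2)\sigma_y=i\,\R_z(\pi/2)^*$ shows such a single-qubit gate lies in $H$ with $c=\mp i$, so this is a one-line fix rather than a real gap.
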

\begin{proof}
First, we prove for the case of even $n$. Consider the logical rebit defined by encoding $\ket{0}$ and $\ket{1}$ into $\ket{0}^{\ox n}$ and $\ket{1}^{\ox n}$, respectively. In the following we show that the logical Hadamard gate on such logical rebit cannot be implemented. Let the two input states in Scheme~\ref{sch2} be $\ket{0}^{\ox n}\ket{0}$ and $\ket{1}^{\ox n}\ket{0}$ (where the last qubit is the phase qubit), respectively, then after the logic Hadamard gate, the states on the data qubits become $\frac{1}{\sqrt{2}}(\ket{0}^{\ox n}+\ket{1}^{\ox n})$ and $\frac{1}{\sqrt{2}}(\ket{0}^{\ox n}-\ket{1}^{\ox n})$, respectively (up to unimportant global phases, which may appear as a change of state on the phase qubit). These two states are at least partially distinguishable by Bob with a constant depth circuit, according to explicit calculations in the case $n=2$ (note that these two input states are related to each other by a $\Zgate$ gate on only one qubit; the calculation is by finding an off-diagonal element in Bob's reduced density matrix that depends on the choice of input state), which is easily generalizable to any even $n$. This implies that the two original input states are at least partially distinguishable by Bob, which violates Theorem~\ref{thm2}(i). Therefore, the logic Hadamard gate on such logical rebit cannot be implemented deterministically.

Next, we prove for the case of odd $n$ where $n\ge 3$. The input states $\ket{0}^{\ox n}\ket{0}$ and  $\ket{0}^{\ox (n-1)}\ket{1}\ket{0}$ can be at least partially distinguished, according to explicit calculations.  (Note that these two input states are related to each other by an $\Xgate$ gate on the last data qubit.) This implies that the unitary that maps $\ket{1}^{\ox n}\ket{0}$ to  $\ket{0}^{\ox (n-1)}\ket{1}\ket{0}$ while keeping  $\ket{0}^{\ox n}\ket{0}$ unchanged cannot be implemented deterministically, since otherwise the two states $\ket{0}^{\ox n}\ket{0}$ and $\ket{1}^{\ox n}\ket{0}$ are at least partially distinguishable by Bob, which violates Theorem~\ref{thm2}(i). This completes the proof.
\end{proof}

For general complex input states encoded in rebit form, there is some evidence that partial data privacy exists in general. Specifically, there is numerical evidence that if classical information is encoded in some random (non-Pauli and non-real) product bases, and then encoded into the rebit form, Bob cannot generally distinguish the input states deterministically with a depth-$2$ circuit (depth-$2$ means there are two gadgets on each qubit line). Note that if the product bases above are chosen to be real, then Bob may deterministically distinguish between some pairs of input states with a depth-$2$ circuit, according to the argument in the proof of Theorem~\ref{thm1}.

For circuit privacy in the two schemes above, we have the following theorem. (The input size is $n$.)
\begin{theorem}\label{thm4}
If Scheme~\ref{sch1} (or Scheme~\ref{sch2}) is used for implementing a unitary, Alice learns at most $O(n)$ bits of information about the circuit; If it is used for computing a classical function with $k$ output bits, Alice learns at most $k$ bits of information about the circuit.
\end{theorem}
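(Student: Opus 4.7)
The plan is to decompose Alice's total view into three pieces and bound each's contribution to her mutual information with the circuit $T$: her gadget measurement outcomes $r_A$, the classical Pauli correction string $P$ that Bob transmits at the end (covering the $n+1$ physical qubits), and her reduced quantum state $|\psi\rangle$, which by correctness satisfies $P|\psi\rangle = T(\sigma)$.

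I would first show that both $r_A$ and $P$ are marginally independent of $T$. For $r_A$, each gadget step on Alice's side consists only of a controlled-$i\sigma_y$, an $\R_y(\pi/2)$, and a $Z$-basis measurement on her half of an EPR pair; a direct one-line calculation verifies that the measurement outcome is uniform on $\{0,1\}$ for every data-qubit input, independent of $T$. For $P$, I would lean on the remote-implementation structure reviewed in Sec.~\ref{sec3}: Bob's $Z$-basis outcomes after his unitary $C$ are uniformly random on his side because his EPR halves have maximally mixed marginals and $C$ is unitary, so these outcomes propagate through the later Clifford layers into an accumulated Pauli correction whose marginal law is the uniform distribution on the Pauli group irrespective of which $Y$-diagonal unitaries were intended. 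Hence $I(T;r_A\mid\sigma)=0$ and $I(T;P\mid\sigma)=0$, and $P$ functions only as a one-time-pad key for decoding the quantum output.

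Combining these observations with correctness gives that $(P,|\psi\rangle)$ and $(P,T(\sigma))$ are reversibly related; the chain rule then yields
\begin{equation}
I\bigl(T;\,P,T(\sigma)\mid\sigma\bigr)=I\bigl(T;\,T(\sigma)\mid\sigma\bigr),
\end{equation}
so Alice's total information about $T$ is bounded by the classical information accessible from the state $T(\sigma)$. In the unitary case, $T(\sigma)$ is an $(n+1)$-qubit pure state, so the Holevo bound yields at most $n+1=O(n)$ classical bits. In the case that $T$ computes a classical function with $k$ output bits, $T(\sigma)$ is a computational-basis string $f(\sigma)$ of length $k$, and the accessible information is bounded by $H(f(\sigma))\le k$.

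The main obstacle will be the uniformization claim for $P$. One must verify, for every admissible circuit $T$, that the individual Paulis produced by Bob's measurements inside each $Y$-diagonal layer are uniform and that their propagation (under Pauli-tracking) through the intermediate $\R_z(\pi/2)$ gates and through the $C$-mediated $Y$-diagonal layers combines into a final correction whose distribution still has no residual dependence on $T$. Once this is in place, the rest of the proof reduces to the chain rule together with the Holevo bound, as outlined above, and the same reasoning transfers verbatim to Scheme~\ref{sch2} since its gadget structure and correction mechanism are identical.
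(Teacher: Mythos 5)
Your route is genuinely different from the paper's, and as written it has a gap at exactly the point you flag as ``the main obstacle.'' The paper's proof does not analyze the distribution of the correction string at all: it introduces a hypothetical program register $C$ on Bob's side, observes that Alice's system and $C$ are in a product (mixed) state up to the moment Bob sends his classical message (since all communication before that point flows from Alice to Bob, and Bob's local operations cannot create correlations with Alice), and then invokes Theorem 1 of \cite{DHL04}, which says that $l$ bits of one-way classical communication can raise the maximal classical mutual information by at most $l$ bits. The bound is then just a count of the bits Bob transmits: $O(n)$ Pauli-correction bits in the unitary case, and only the $k$ correction bits relevant to the measured output in the classical case. Nothing about uniformity, one-time pads, or correctness of the evaluation is needed.

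Your decomposition, if completed, would prove something stronger (that Alice learns no more than what is deducible from $T(\sigma)$ itself, which is the ideal circuit-privacy statement), but the load-bearing step is the one you defer. You must show not merely that $P$ is marginally independent of $T$, but that conditionally on $(\sigma,T)$ the string $P$ is uniform over a support that does not depend on the secret part of $T$; only then does $I(T;P\mid T(\sigma),\sigma)=0$, which is what the chain rule actually requires to pass from $I(T;P,T(\sigma)\mid\sigma)$ to $I(T;T(\sigma)\mid\sigma)$. The marginal statement $I(T;P\mid\sigma)=0$ alone does not let you drop $P$ after conditioning on $T(\sigma)$, because the correctness relation $|\psi\rangle=P^{\dagger}\,T(\sigma)$ couples the three variables. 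Establishing the required joint uniformity means tracking how the per-layer uniform group elements from the $C$-matrix measurements propagate through the controlled-$\R_y(j\pi)$ gates and subsequent $Y$-diagonal layers, and checking that the reachable set of final Pauli keys is fixed by the publicly known structure of the circuit rather than by Bob's secret coefficients; this is plausible given the group-based protocol of \cite{KR03,ygc10,LW13}, but it is real work that the proposal does not carry out. Either complete that propagation argument explicitly, or switch to the paper's communication-counting argument, which bypasses the issue entirely and already yields both stated bounds.
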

\begin{proof}
In \cite[Theorem 1]{DHL04}, consider the case that the initial maximum classical mutual information $I_c(\rho)=0$, which means two systems $A$ and $B$ satisfy that $\rho_{AB}=\rho_{A}\ox\rho_{B}$. After $l$ bits of classical communication in one direction, the maximum classical mutual information is at most $l$ bits. Consider adding a hypothetical program register in Scheme~\ref{sch1} or \ref{sch2} on Bob's side, and call it $C$. After Alice has sent measurement outcomes to Bob in the scheme, the combined system of Alice's system (denoted $A$) and $C$ is in a direct product (mixed) state. Then, after some local operations on Bob's side between $C$ and Bob's other systems, the systems $A$ and $C$ is still in a direct product (mixed) state. Then $l$ bits of classical message is sent by Bob to Alice. Applying \cite[Theorem 1]{DHL04} to this case, we get that the final maximum classical mutual information is at most $l$ bits. This means Alice learns at most $l$ bits of information about the program register $C$. Thus the assertions hold.
\end{proof}

In Scheme~\ref{sch2}, since Bob only sends $2n$ classical bits, Alice can learn at most $2n$ bits of information about the circuit. We think this amount is optimal for a sufficiently large class of circuits such as the one in Scheme~\ref{sch2}), since Alice could always use superdense coding to learn about $2n$ bits from the output of the computation.

\section{An interactive QHE scheme}\label{sec5}

Before describing the main interactive QHE scheme and its enhanced version, we first introduce the non-interactive Scheme~\ref{schsub}. It is almost a subroutine in the main Scheme~\ref{schmain}, except that it is altered a little when used in the main scheme.

{\bf (1) About Scheme~\ref{schsub}.}  We propose the following Scheme~\ref{schsub} for computing linear polynomials with classical input. It is for computing $y=(c+\sum_{i=1}^n a_i x_i)\,\,{\rm mod}\,\,2$, where $x_i$ are bit values of Alice's classical input, and $a_i$ and $c$ are constant bits known to Bob. The Scheme~\ref{schsub} uses $2nk$ EPR pairs in the teleportations, where $k$ is a positive integer that both parties agree on. The Alice's data $x_i$ in scheme are asymptotically secure for large $k$, and Bob's attempts to learn about the data would affect the correctness of evaluation. However, Bob's input (the $a_i$) are not secure: Alice could initially entangle her input qubits with some auxiliary qubits, and carry out the scheme as if she had used the usual input states, to learn some information about Bob's input $a_i$, at the cost of not computing the distributed outcome correctly (this refers to the situation when Scheme~\ref{schsub} is used in Scheme~\ref{schmain} below, in which the sending of the last bit by Bob is omitted, thus the outcome of the altered Scheme~\ref{schsub} is distributed across two parties), although she can compute the correct outcome herself. Such property is exploited below to yield a protocol with verifications such that Alice's cheating can be caught by Bob.

\begin{algorithm*}[htb]
\caption{Partially-secure QHE for computing classical linear polynomials}\label{schsub}
{\bf The type of allowed circuits:} those calculating a linear polynomial $y=(c+\sum_{i=1}^n a_i x_i)\,\,{\rm mod}\,\,2$, where $x_i$ are bit values of Alice's classical input, and $a_i$ and $c$ are constant bits known to Bob.\\
\begin{enumerate}
\item Alice and Bob agree on a positive integer $k$ related to the security of the scheme. A suitable choice of $k$ for a single use of the current scheme is $O(\log \frac{1}{\e})$, where $\e$ is a small real number indicating the targeted amount of leakage of each qubit of data.
\item
\begin{enumerate}
\item For each $i$, Alice randomly chooses $k$ bits $x_{ij}$ satisfying that $x_i=\sum_{j=1}^k x_{ij} \mod 2$, and she generates random bits $s_{ij}$, where $j=1,\dots,k$. The random bits $s_{ij}$ are unbiased and independent from each other and from the bits $x_{ij}$.
\item For each $(i,j)$ pair, Alice encodes $x_{ij}$'s value $0$ into $\ket{00}$ (when $s_{ij}=0$) or $\ket{++}$ (when $s_{ij}=1$), and $x_{ij}$'s value $1$ into $\ket{10}$ (when $s_{ij}=0$) or $\ket{+-}$ (when $s_{ij}=1$).
\item For each pair of qubits, she teleports them to Bob without telling him the bit for $Z$ correction on the first qubit nor the bit for $X$ correction on the second qubit. When $s_{ij}=0$, she records the bit for $X$ correction on the second qubit as $t_{ij}$, otherwise she records the bit for $Z$ correction on the first qubit as $t_{ij}$. Then $t_{ij}$ is a random bit.
\end{enumerate}

\item
\begin{enumerate}
\item For each $(i,j)$ pair, if $a_i=0$, Bob does a $\cnot$ gate on the two received qubits corresponding to $x_{ij}$, otherwise he does nothing.\label{stepcnot}
\item Bob teleports the resulting qubits to Alice, but withholding part of the information about the measurement outcomes: he calculates the XOR of the two measurement outcomes in the teleportation of each qubit, and calculate the XOR of the obtained two bits in each pair, and sends the resulting bit to Alice. Thus, Alice receives the two qubits with a possible $\sigma_y$ mask for each qubit, together with possible $\sigma_z$ masks on an even number of qubits in each pair of qubits. If the $\sigma_y$ correction is needed for one received qubit, we regard Bob's mask bit for that qubit as $1$, and otherwise we regard it as $0$.
\end{enumerate}
\item For each $(i,j)$ pair, Alice measures the two qubits in the $Z$ basis if $s_{ij}=0$, and in the $X$ basis if $s_{ij}=1$. She calculates the XOR of both outcomes, and take the XOR with $t_{ij}$, and obtains a bit. She flips this bit if $s_{ij}$ and Bob's sent bit for this pair are both $1$. Alice calculates the XOR of the obtained bits for all different $(i,j)$. Bob calculates the XOR value of $c$ and all his mask bits for the $2kn$ qubits, and sends the resulting bit to Alice. Alice corrects the output by a bit flip if the last sent bit from Bob is $1$, and the result is the final output.
\end{enumerate}
\end{algorithm*}

Note that it was previously thought that Bob should permute the two qubits in each pair with probability $\frac{1}{2}$ after step~\ref{stepcnot}, but such permutation is in fact redundant (in Alice's view) because of the withholding of partial information in the teleportations in the following step. Thus, such step has been omitted.

The data in Scheme~\ref{schsub} is partially secure for any fixed $k$. For any $(i,j)$ pair, it can be found that the average density operator (averaged over the possible values of $s_{ij}$ and $t_{ij}$) for the two-qubit state corresponding to each input bit $x_{ij}$ are not orthogonal for the cases $x_{ij}=0$ and $x_{ij}=1$. The trace distance of the two density operators is $\frac{1}{2}$. If Bob chooses a best POVM measurement (which can be chosen to be a projective measurement: to measure the first qubit in each pair of qubits in the $\Zgate$ basis, and the second qubit in the pair in the $\Xgate$ basis), he has probability $\frac{3}{4}$ of guessing the value of $x_{ij}$ correctly, and he has only probability $\frac{1}{2}$ of assuredly finding out the value of $x_{ij}$. And because of the effective independence among the $x_{ij}$ with different $j$ (given that $x_i$ is \emph{a priori} unknown to Bob), Bob's has probability $\frac{1}{2^k}$ of assuredly finding out $x_i$. Equivalently, the trace distance of the two density operators corresponding to $x_i=0$ or $x_i=1$ on the $2k$ qubits is $\frac{1}{2^k}$, which agrees with the result of numerical calculations. Thus, the probability of Bob correctly guessing $x_i$ is $1-(1-\frac{1}{2^k})/2=\frac{1}{2}+\frac{1}{2^k}$, which implies that he gets little information about $x_i$ when $k$ is large. Since measurements generally perturb the original state, it can be expected that the correctness of the computation cannot be guaranteed when these measurements are done, and in fact, in the current case, the requirement of exact correctness of the computation implies that Bob cannot get any information about the value of $x_{ij}$, and thus he cannot find out the $x_i$. If approximate correctness is required, Bob still cannot get much information about the value of $x_i$. Since the bits of the input, the $x_i$, are independently encrypted (as opposed to the correlated encoding in Sec.~\ref{sec6}), the data security automatically satisfies the IND-CPA criterion (with classical input but quantum measurements, same below), thus the main Scheme~\ref{schmain} below, which contains the modified Scheme~\ref{schsub} as a subprocedure, satisfies the q-IND-CPA criterion in \cite{bj15}. At the end of Sec.~\ref{sec6} below, we will mention a continuous class of variants of the current scheme, in which the data privacy is somewhat compromised (but still good for sufficiently random input), while the circuit privacy improves significantly.

For circuit privacy, we first discuss the case of Alice being honest, i.e. she exactly follows the protocol: in this case, Bob's sending of functions of measurement outcomes in the teleportations do not reveal useful information. Thus Alice learns at most $1$ bit of information about the circuit, for the same reason as shown in the proof of Theorem~\ref{thm4}.

In the following we discuss circuit privacy in the case of a cheating Alice. She could use some input state different from that specified in the scheme, or initially entangle her input qubits with some auxiliary qubits, and carry out the scheme as if she had used the usual input states, to learn some information about Bob's input $a_i$. Her best choice in the case of $n=1$ is to use the two qubit state $\frac{1}{2}(\ket{00}+\ket{01}+\ket{10}-\ket{11})$ for a particular $(i,j)$ pair. After Bob returns two qubits, she can find out Bob's input $a_i$ with certainty. But if Alice cheats, she generally cannot get a definite distributed outcome $a_i x_{ij}$ (which is expected to exist as the XOR of two bits at the two parties) related to the particular $x_{ij}$. Although she may learn $a_i$ by cheating and compute $a_i x_{ij}$ by herself, such computation result is on Alice's side only, while in the form of Scheme~\ref{schsub} used in Scheme~\ref{schmain} below, it is distributed as the XOR of two bits at opposite parties. If Bob asks Alice to send back her part of the (classical) result to combine with his corresponding mask bit (which is itself the XOR of two local bits), an error would sometimes occur. The technical reason that for those $j$ for which Alice cheated, she cannot definitely make the distributed outcome correct is because of the following: in the case of a conclusive measurement outcome of her POVM to distinguish Bob's two programs, she cannot deterministically and correctly find out Bob's values of the logical mask bit (the XOR of his two local mask bits) since that needs a measurement incompatible with that POVM she performed. (The previous statement is a somewhat nontrivial fact implied by that the two cases of zero or two $\sigma_z$ masks for the pair may appear with probability $\frac{1}{2}$ each.) Thus Alice would have significant probability of error in learning the logical mask bit of Bob's. Next, we discuss two different strategies of Alice's.

One strategy a cheating Alice may adopt is to aim for perfect knowledge about $a_i$. This can be achieved by using the type of input state mentioned in the last paragraph (or some other states), but then she would have significant probability of error in guessing the mask bit of Bob's. Since such mask bits are used in Scheme~\ref{schmain} below for the $\tgate$ gates, Alice's lack of knowledge about such bits would cause the evaluation to be incorrect, unless she chooses to deviate from the scheme once she has cheated by learning perfectly about some $a_i$: she calculates the intermediate result directly from the knowledge about $a_i$, and cheats later to learn about Bob's other coefficient(s), and calculates the new intermediate result, and continues to do this throughout the evaluation. Therefore, we may say that Alice's cheating may either affect the correctness of the evaluation, or cause deviation from the original protocol, and in both cases such cheating would be caught by Bob in an interactive scheme (Scheme~\ref{schcheck} below).

Another strategy for Alice is that she aims for partial knowledge about $a_i$ and partial knowledge of Bob's mask bit. She could use joint input states and joint measurements across different $j$ to increase her probability of success in learning both Bob's input and the overall mask bit of Bob's, which has been confirmed by numerical calculations. As a result of this type of attack, we estimate that Alice would have significant probability of guessing correctly both the output bit and Bob's overall logical mask bit for a particular variable $x_i$. Since a variable will appear for many times in the Scheme~\ref{schmain} below, with suitable choices of the $k$ (see below), the probability of Alice guessing correctly both the contribution to the output from the variable $x_i$, and the overall mask bit of Bob's related to that variable, can be upper bounded by a constant.

{\bf (2) About Scheme~\ref{schmain}.} The following Scheme~\ref{schmain} is an interactive scheme. It has asymptotic data privacy but the circuit privacy is not good. Later in Scheme~\ref{schcheck} we add some artificial verifications for improving circuit privacy.

In Scheme~\ref{schmain}, we denote an upper bound on the number of $\tgate$ gates in the desired circuit as $R$.
The scheme makes use of some ideas from the computationally-secure QHE scheme {\sf TP} in \cite{Dulek16}: the idea of teleporting the state to Bob's side to let him do the gates; and the garden-hose construction for doing a Clifford gate depending on the exclusive-OR of two bits at opposite parties (Appendix~\ref{appd2}). It also uses many instances of Scheme~\ref{schsub} (called the lower-level scheme here), except the very last small step in it, namely the correction of the output bit by Alice according to Bob's message. This is because in Scheme~\ref{schmain} we hope to reduce Bob's messages to Alice as much as possible, for improving circuit privacy. The general structure of the scheme in \cite{Dulek16} is that Alice first teleports the input state to Bob's side, but withholds the Pauli keys, and Bob performs the Clifford gates and the $\tgate$ gates in the desired circuit, while the corrections are calculated with the help of a classical HE scheme and some measurement on some EPR pairs (based on the ``garden hose'' technique developed in some related papers). In this work we get rid of the classical HE scheme, and only use the most elementary instance of the ``garden hose'' model. By combining Scheme~\ref{schsub} with the simple ``garden hose'' model in Appendix~\ref{appd2}, we obtain an interactive QHE scheme.

In the scheme, we make use of the following polynomials to update the initial Pauli keys. The $f_{a,i}$ and $f_{b,i}$ ($i=1,\dots,n$) are key-updating polynomials which are linear polynomials in $2n+4R$ variables. All variables are in the finite field $\mathbb{F}_2$. The first $2n$ variables are the initial Pauli keys $a,i$ and $b,i$ ($i=1,\dots,n$), corresponding to the Pauli operator $\bigotimes_{i=1}^n X_i^{a,i}Z_i^{b,i}$ applied on $n$ qubits. The other $4R$ variables are Alice's Bell-state measurement outcomes in the ``garden hose'' gadget in Appendix~\ref{appd2}. The output of the polynomials are $\{f_{a,i},f_{b,i}\}_{i=1}^n$, corresponding to the Pauli correction $\bigotimes_{i=1}^n X_i^{f_{a,i}}Z_i^{f_{b,i}}$. (If these $4R$ variables are not introduced, Alice would need to tell the Bell-state measurement outcomes to Bob, making the data less secure for a cheating Bob.)

The initial $2n$ variables are unchanged at each Clifford stage in the circuit. Bob changes the coefficients for them in the polynomials in instances of Scheme~\ref{schsub}. The way Bob changes the coefficients is because of some key-update rules, and we call them effective key-update rules below, since they do not change Alice's keys but rather change Bob's coefficients. The effective key-update rules for the first $2n$ variables under the action of Clifford gates are reversible (hence no new variables need to be introduced), and they can be easily obtained from the following relations:
\begin{eqnarray}\label{eq:keyupdate1}
&&\pgate\Xgate=i\Xgate\Zgate\pgate,\quad\quad \pgate\Zgate=\Zgate\pgate,\notag\\
&&\hgate\Xgate=\Zgate\hgate,\quad\quad\quad \hgate\Zgate=\Xgate\hgate,\notag\\
%&&\cnot_{12} (\Xgate_1^a \Zgate_1^b \ox \Xgate_2^c \Zgate_2^d)\notag\\
%&&= (\Xgate_1^a \Zgate_1^{b\oplus d} \ox \Xgate_2^{a\oplus c} \Zgate_2^d)\cnot_{12},
&&\cnot_{12} (\Xgate_1^a \Zgate_1^b \ox \Xgate_2^c \Zgate_2^d)= (\Xgate_1^a \Zgate_1^{b\oplus d} \ox \Xgate_2^{a\oplus c} \Zgate_2^d)\cnot_{12},
\end{eqnarray}
where the $\oplus$ is addition modulo 2, and in the gate $\cnot_{12}$, the qubit 1 is the control. The effective key-update rules for the first $2n$ variables under the $\tgate$ gate can be obtained from the relations
\begin{eqnarray}\label{eq:keyupdate2}
\tgate\Zgate=\Zgate\tgate,\quad\quad \tgate\Xgate=e^{-\pi i/4}\pgate\Xgate\Zgate\tgate.
\end{eqnarray}
More details about the key-update rules are in \cite{bj15,Dulek16}.

The Scheme~\ref{schmain} works for any quantum input, including states with complex amplitudes, and mixed states. There are $2n+R$ instances of Scheme~\ref{schsub} in Scheme~\ref{schmain}. There are $R$ instances of the gadget in Appendix~\ref{appd2} for correcting an unwanted $\pgate$ gate after a $\tgate$ gate, up to Pauli corrections.

\begin{algorithm*}[htb]
\caption{Partially-secure interactive QHE for general quantum input}\label{schmain}

\begin{enumerate}
\item Alice and Bob initially share some EPR pairs. Alice performs the local operations in teleporting the $n$ encrypted data qubits to Bob without sending any message to Bob. This means Bob instantly receives $n$ data qubits with unknown Pauli masks. She keeps a classical copy of the Pauli keys corresponding to the measurement results in teleportations. The two parties agree on a positive integer $k$ which is related to the security of the scheme.

\item For each of the $R$ Clifford stages before a $\tgate$ gate in the target circuit, the two parties do the following.

\begin{enumerate}
\item According to the desired circuit, Bob performs local Clifford gates on the ``received'' $n$ data qubits. For the $\tgate$ gate in the desired circuit, Bob does the $\tgate$ gate.
\item The two parties do an instance of Scheme~\ref{schsub} (with the sending of the last bit by Bob and the last bit flip by Alice in Scheme~\ref{schsub} omitted, and called ``the lower-level scheme'' below). This requires Bob to wait for Alice's message before he does quantum operations, but while waiting, he could do some local classical calculations for finding the coefficients in the polynomials (including the constant term $c$ in the polynomial) to be used in the instance of Scheme~\ref{schsub}. The coefficients are calculated based on Bob's previous measurement outcomes, his outcomes in previous instances of the lower-level scheme, and his knowledge about the desired circuit. Alice performs her first part of operations in this instance of the lower-level scheme (up to sending messages to Bob), with the $2n$ variables being the initial Pauli keys, and $4R$ variables being her Bell-state measurement outcomes (the outcomes for those measurements not yet carried out are regarded as zero). After receiving Alice's message, Bob performs his part of operations in the lower-level scheme, and this includes sending some messages to Alice which are functions of measurement outcomes in the teleportations. After receiving Bob's message, Alice does the remaining operations in the lower-level scheme. Each party obtains a bit as their outcome for the instance of the lower-level scheme.
\item The XOR value of the two bits obtained above indicates whether a $\pgate^\dag$ correction needs to be performed. The two parties perform the Bell-state measurements in the garden-hose gadget in Appendix~\ref{appd2} locally, without any communication. (Since Alice's Bell-state measurements, not including the $\pgate^\dag$ gates, are independent of her input bit, there are only $4$ bits for the outcomes of Alice's Bell-state measurements in each use of the gadget.)
\end{enumerate}

\item Bob does the final stage of Clifford gates in the desired circuit on the encoded data qubits. He does local operations in teleporting the $n$ encoded data qubits to Alice, but without sending her the measurement outcomes in the teleportation.\label{step:beforelastprocess}

\item Alice and Bob perform the following procedure $2n$ times in parallel, each for calculating a Pauli key on one of the $n$ output qubits in the higher-level scheme. The procedure is as follows: Bob does classical calculations to find a polynomial. The coefficients of the polynomial are determined from Bob's previous measurement outcomes, his outcomes in previous instances of the lower-level scheme from step 2 above, and his knowledge about the desired circuit. Alice performs the first part of operations in an instance of the lower-level scheme (up to sending messages to Bob) with the input data being her $2n$ initial encryption keys and $4R$ bits which are outcomes of her previous Bell-state measurements. After receiving Alice's message, Bob performs the gates and measurements in the instance of the lower-level scheme. Bob corrects his part of the correction key (one bit) using some measurement outcome in teleportation in the previous step~\ref{step:beforelastprocess}, and sends the resulting bit to Alice. Alice completes the last part of the lower-level scheme and uses the bit from Bob's message to correct the outcome, and regards the result as a Pauli correction key in the higher-level scheme.

\item Alice applies Pauli gates on the data qubits according to the keys from the last step, to get the output state.
\end{enumerate}
\end{algorithm*}

The correctness of Scheme~\ref{schmain} follows from the correctness of Scheme~\ref{schsub} and our way of using the scheme {\sf TP} in \cite{Dulek16}. Since each input variable appears for $O(n+R)$ times (for example, in each of the $R$ polynomials related to $\tgate$ gates), it is necessary to set the parameter $k$ to be sufficiently large, while the circuit size needs to be multiplied by a factor to allow recompilation of the original circuit into a form not easily guessed by Alice from partial information about the coefficients in the polynomials. In the following, we give some estimate under the requirement of asymptotic data privacy; the circuit privacy is to be improved in Scheme~\ref{schcheck}. As mentioned previously, it is necessary to set $k=O(\log \frac{1}{\e})$ to hide the data in one instance of Scheme~\ref{schsub}. Since each input variable of Alice's appear for at most $O(n+R)$ times in Scheme~\ref{schmain}, to let the data privacy be independent of the correctness of evaluation, it suffices to set $k=O[\log(\frac{n+R}{\e})]$, where $\e$ is a small positive real number describing the leak of information about an input qubit of Alice's. An $\e$ near zero means that the qubit is almost secure. The total classical communication cost and entanglement cost are $O[(n^2+n R)\log(\frac{n+R}{\e})]$ (bits and ebits, respectively), as there are $O(n+R)$ instances of the lower-level scheme, each of which uses $O(nk)=O[n\log(\frac{n+R}{\e})]$ EPR pairs and a proportional amount of classical communication.

We now discuss the circuit privacy in Scheme~\ref{schmain}. If Alice honestly follows the protocol, the bits sent by Bob during the instances of the lower-level scheme do not actually carry useful information about his circuit, thus she learns at most $O(n)$ bits about the circuit, according to the analysis after Scheme~\ref{schsub}. This degree of circuit privacy is already optimal for general unitary circuits, since Alice could always use dense coding to learn about $2n$ bits from the output of the computation; and since we do not allow the use of initial entangled states for an honest Alice, she in fact learns only $n$ bits. If Alice cheats by using some other input state for each instance of the lower-level scheme, she could learn more information about the circuit, see the discussion after Scheme~\ref{schsub}. A method for improving circuit privacy is in the following Scheme~\ref{schcheck}.

\begin{algorithm*}[htb]
\caption{Interactive QHE with embedded verifications for improving circuit privacy}\label{schcheck}

\begin{enumerate}
\item It is assumed that Alice and Bob agree on an instance of Scheme~\ref{schmain}, that is, each party knows what he or she should know about the instance of the scheme.
\item Bob revises the original circuit by adding a constant number of auxiliary qubits initially in Pauli eigenstates on his side, and let these qubits undergo some joint gates with the data qubits, where the gates are chosen randomly and include some $\tgate$ gates, satisfying that each added qubit becomes a Pauli eigenstate independent of Alice's input data at some stage in the circuit. The circuit in the added part of the protocol is implemented in the same way as in Scheme~\ref{schmain}. In the redesigned protocol, Alice sends Bob some number of bits, each of which is her part of the distributed outcome of an added instance of reduced Scheme~\ref{schsub} (``reduced'' means without Bob sending her the last bit and without her subsequent correction), and they represent Pauli masks (due to the combined effect of some Bell-state measurements in some instances of the garden-hose gadget in Appendix~\ref{appd2}) for some auxiliary qubits at various steps in the protocol, for him to check if Alice has cheated.
\item Alice and Bob perform the designed modified protocol. At intermediate steps, Bob measures some local auxiliary qubits, and combines the outcomes with the Pauli masks sent by Alice, and then checks the result against his mask bits (which are from Scheme~\ref{schsub}) using the XOR operation, to check if their correlations (XOR of the corresponding bits) are as Bob expected. If not, then he regards Alice as having cheated, and aborts the protocol.
\item If the protocol is completed without Bob aborting, then the final result on Alice's side is the desired computational result.
\end{enumerate}
\end{algorithm*}

{\bf (3) About Scheme~\ref{schcheck}.} The Scheme~\ref{schcheck} is based on Scheme~\ref{schmain}. The general idea is that Bob asks Alice to return some messages as outputs for his embedded checking procedures, and Bob aborts the protocol if the returned messages are not as expected. In this way, Bob verifies the quantum computation for the original circuit. This is inspired by the results in the literature that a quantum verifier can verify the result of a polynomial-sized quantum computation of a prover. Some protocols of verification for different types of blind quantum computing have been constructed \cite{FK17,ABOEM17,Broadbent18}. The verification protocol \cite{FHM18} is not intended for blind quantum computing. A computationally-secure QFHE scheme with verification is given in \cite{ADSS17}.

The reason that verifications are of significant help here is because of a property of Scheme~\ref{schmain}: Alice's knowledge about (the current part of) Bob's program does not mean she can send back the correct messages. This is because Bob verifies Alice's messages against his mask bits, which are unknown to the honest Alice, and a cheating Alice cannot (almost) deterministically find out those mask bits while (almost) deterministically learning about the related part of Bob's program at the same time, since the two tasks involve incompatible measurements. An error of Alice, which can also be stated as an error of the distributed result in an instance of Scheme~\ref{schsub}, will affect the correctness of removing unwanted $\pgate$ after each $\tgate$ gate. Consequently, due to the $\tgate$ gates added in the random gates in Scheme~\ref{schcheck}, Alice has large probability of returning some wrong classical message, resulting in wrong state of Bob's auxiliary qubits after he corrects such qubits according to Alice's message.

Here is an explanation for the ``added instance of reduced Scheme~\ref{schsub}'' in Scheme~\ref{schcheck}. Such instance is for computing a polynomial whose output is a bit, representing whether some Pauli $Z$ or $X$ correction should be performed on some qubit on Bob's side, and Bob will later measure those qubits in some suitable Pauli basis to find out whether Alice has cheated. The input variables for such polynomial should be the (at most) $2n+4R$ variables in Scheme~\ref{schmain}, but since the verification is supposed to be independent of the input data, the $2n$ variables actually have zero coefficients (for honest Bob), and the remaining (at most) $4R$ variables are from Alice's Bell-state measurements in some instances of the garden-hose gadget in Appendix~\ref{appd2}.

By choosing the level of data privacy in the original scheme suitably, the Scheme~\ref{schcheck} keeps the original data qubits quite secure with the exception that the verification messages may reveal information about the data. Since the total length of such messages is limited to a constant (see the expression of $m$ below), the amount of information leak introduced by the revision compared to Scheme~\ref{schmain} is a constant. Thus, the data privacy may be quite satisfactory, especially in the applications in which partial information about the data do not reveal much useful information, say when the input is already the output of some hashing procedure (or the encryption procedure in some classical homomorphic encryption scheme) acting on some original data. By choosing a suitable number of check bits from Alice, the circuit privacy in Scheme~\ref{schcheck} can be enhanced to a satisfactory level: suppose there are $m$ check bits, then since a cheating Alice would return a wrong bit with some constant probability (a cheating Alice should cheat at almost all places in the protocol, otherwise she learns little about the circuit due to the possible different decompositions of the circuit; see also the analysis about circuit privacy after Scheme~\ref{schsub}), the probability that at least one check bit is wrong is about $1-c^m$, where $c\in(0,1)$ is a constant real number. If the desired level of the probability that a cheating Alice is not caught is $\delta$, we may set $m=O(\log \frac{1}{\delta})$. Consequently, the Scheme~\ref{schcheck} keeps all but $O(\log \frac{1}{\delta})$ qubits of the input secure, and there is $1-c'\delta$ chance that the information about Bob's intended circuit is asymptotically secure except for an amount proportional to input size, where $c'$ is a constant related to the prior probability that Alice may cheat.

Random recompilation of the circuit may help circuit privacy. For some simple types of circuits, such as Clifford circuits, it may be helpful to recompile the circuit to fit it into a more complex class of circuits. Enlarging the circuit size by a constant factor is already sufficient to allow for many different forms of the original circuit.

Note that a purely classical client may be used in the verification of quantum computation \cite{Mahadev18v,ADSS17} (at least in the case of classical input and output), but this is based on computational assumptions, thus it is out of the situations considered in Scheme~\ref{schcheck}. We leave as an open problem whether the method in \cite{Mahadev18v} or \cite{ADSS17} together with Scheme~\ref{schmain} can be used to construct an interactive QHE scheme with classical client with better security than that in \cite{Mahadev17}.

\section{A class of variants of Scheme~\ref{schsub}}\label{sec6}

In this section, we first introduce Scheme~\ref{schsub2}, which is a variant of Scheme~\ref{schsub}. The use of such scheme is discussed in later sections.
We then mention that there is a continuous class of variants of Scheme~\ref{schsub}, which interpolates between Scheme~\ref{schsub} and Scheme~\ref{schsub2}.

\begin{algorithm*}[htb]
\caption{A variant of Scheme~\ref{schsub} for computing classical linear polynomials}\label{schsub2}
{\bf The type of allowed circuits:} those calculating a linear polynomial $y=(c+\sum_{i=1}^n a_i x_i)\,\,{\rm mod}\,\,2$, where $x_i$ are bit values of Alice's classical input, and $a_i$ and $c$ are constant bits known to Bob.\\
\begin{enumerate}
\item Alice and Bob agree on a positive integer $k$ related to the security of the scheme. The suitable choice of $k$ is $O(n)$ in general, but may be chosen to be $O(\log n)$ if correlations among input bits may be leaked (c.f.~Theorem~\ref{thm5}).
\item
\begin{enumerate}
\item For each $i$, Alice randomly chooses $k$ bits $x_{ij}$ satisfying that $x_i=\sum_{j=1}^k x_{ij} \mod 2$. She generates independent and unbiased random bits $s_j$, where $j=1,\dots,k$. The $s_j$ are independent from $x_{ij}$.
\item For each $(i,j)$ pair, Alice encodes $x_{ij}$'s value $0$ into $\ket{00}$ (when $s_j=0$) or $\ket{++}$ (when $s_j=1$), and $x_{ij}$'s value $1$ into $\ket{10}$ (when $s_j=0$) or $\ket{+-}$ (when $s_j=1$).
\item For each pair of qubits, she teleports them to Bob without telling him the bit for $Z$ correction on the first qubit nor the bit for $X$ correction on the second qubit. When $s_j=0$, she records the bit for $X$ correction on the second qubit as $t_{ij}$, otherwise she records the bit for $Z$ correction on the first qubit as $t_{ij}$. Then $t_{ij}$ is a random bit.
\end{enumerate}

\item
\begin{enumerate}
\item For each $(i,j)$ pair, if $a_i=0$, Bob does a $\cnot$ gate on the two received qubits corresponding to $x_{ij}$, otherwise he does nothing.
\item Bob teleports the resulting qubits to Alice, but withholding part of the information about the measurement outcomes: he calculates the XOR of the two measurement outcomes in the teleportation of each qubit, and calculate the XOR of the obtained $2n$ bits with the same $j$, and sends the resulting bit, denoted $v_j$, to Alice. Thus, Alice receives the qubits with a possible $\sigma_y$ mask for each qubit, together with possible $\sigma_z$ masks on an even number of qubits in each group of $2n$ qubits. If the $\sigma_y$ correction is needed for one received qubit, we regard Bob's mask bit for that qubit as $1$, and otherwise we regard it as $0$.
\end{enumerate}
\item For each $j$, Alice measures the $2n$ qubits in the $Z$ basis if $s_j=0$, and in the $X$ basis if $s_j=1$. She calculates the XOR of the $2n$ outcomes for the same $j$, and take the XOR with $(\sum_{i=1}^n t_{ij}) \mod 2$, and obtains a single bit for each $j$. She flips the resulting bit if $s_j=1$ and Bob's sent bit $v_j=1$. Alice calculates the XOR of the obtained bits for all different $j$. Bob calculates the XOR value of $c$ and all his mask bits for the $2kn$ qubits, and sends the resulting bit to Alice. Alice corrects the output by a bit flip if the last sent bit from Bob is $1$, and the result is the final output.
\end{enumerate}
\end{algorithm*}

Since Scheme~\ref{schsub2} is non-interactive, from the proof of Theorem~\ref{thm4}, we find that a \emph{cheating} Alice learns at most $k + 1$ bits of information about Bob's circuit (i.e. about $a_j$ and the constant $c$). If the last bit of communication is omitted, which may be the case when Scheme~\ref{schsub2} is used in some higher-level schemes such as Scheme~\ref{schmain}, Alice learns at most $k$ bits of information about Bob's coefficients $a_j$, and in such case the result is distributed as the XOR of two remote bits. It should be stressed that Alice's learning of the circuit would affect the correctness of evaluation of the distributed result, for similar reasons as discussed after Scheme~\ref{schsub}. If the result is not distributed, but completely on Alice's side, and if $k+1<n$, she still cannot learn all information about Bob's input. If Alice wants to ensure that the evaluation (with distributed result when $k+1\ge n$) is exactly correct, she cannot learn any information about the circuit apart from what is learnable from the output of the evaluation. But she can learn about Bob's input up to the amount mentioned above if she does not care about the correctness of evaluation of the distributed result.

For the data privacy of Scheme~\ref{schsub2}, we consider two criteria: first, when the input is \emph{a priori} uniformly random in the other party's view, the amount of information learnable by Bob is small. The second is the usual IND-CPA type of criterion: that any two classical input strings are hardly distinguishable by Bob. The security under the first criterion is partially discussed in Theorem~\ref{thm5} and in Appendix~\ref{appd3}. It is found that even with uniformly random data, the security of Scheme~\ref{schsub2} is not very good by itself, but with the help of the type of verifications in the last section, the security in an interactive QHE scheme would be quite interesting. Later we show that the data security of Scheme~\ref{schsub2} does not satisfy the second criterion.

\begin{theorem}\label{thm5}
The following statements hold for Scheme~\ref{schsub2} in the case that Alice's input data is uniformly distributed in Bob's view.\\
(i) The information learnable by a cheating Bob about each bit of Alice's input is $\frac{1}{2^k}$ bits (although he may learn more about the correlations between the input bits).\\
(ii) The information learnable by a cheating Bob about the joint distribution of the input bits is at least $n-k$ bits.
\end{theorem}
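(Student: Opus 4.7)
The plan is, for part (i), to reduce to the single-variable trace-distance analysis carried out in the paragraph after Scheme~\ref{schsub}. Consider the marginal state Bob holds on the $2k$ qubits corresponding to a fixed $i$, obtained by tracing out all pairs $(i',j)$ with $i'\neq i$ and averaging over the other input bits $x_{i'}$. Although $s_j$ is shared across $i$ for the same $j$ in Scheme~\ref{schsub2}, this sharing only induces correlations in the \emph{joint} state across different $i$, and these correlations disappear under the partial trace. One is left with a state that depends only on $x_i$ and in which each pair $(i,j)$ carries an independent uniformly random basis $s_j$, which is exactly the situation studied for Scheme~\ref{schsub}. The same computation therefore yields trace distance $1/2^k$ between the $x_i=0$ and $x_i=1$ marginal states, matching the quoted bound (and upper-bounding the extractable classical information about the single bit).

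For part (ii), the plan is to exhibit an explicit attack by Bob that witnesses at least $n-k$ bits of mutual information with the full input $\vec{x}=(x_1,\dots,x_n)$. After Bob applies the Pauli corrections that Alice did send, his state on each pair $(i,j)$ is of the form $Z^{a_{ij}}|\psi_1^{(ij)}\rangle\otimes X^{b_{ij}}|\psi_2^{(ij)}\rangle$, with the masks $a_{ij},b_{ij}$ unknown to him. Since $Z^{a_{ij}}$ commutes up to phase with a $Z$-basis measurement on the first qubit and $X^{b_{ij}}$ with an $X$-basis measurement on the second, Bob performs exactly these measurements. Writing $\vec{m}^1_j,\vec{m}^2_j\in\{0,1\}^n$ for the resulting outcome vectors at level $j$, a direct check of the four encoding cases shows $\vec{m}^1_j=\vec{x}_j:=(x_{1j},\dots,x_{nj})$ and $\vec{m}^2_j$ independent uniform noise when $s_j=0$, and the two roles swapped when $s_j=1$.

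A Bayes-rule computation with uniform priors on $s_j$ and on the shares $\vec{x}_j$ (the latter following from the uniform distribution of $\vec{x}$, since the joint of the shares is then uniform on $(\{0,1\}^n)^k$) shows that the posterior of $\vec{x}_j$ given $(\vec{m}^1_j,\vec{m}^2_j)$ is uniform on the two-element set $\{\vec{m}^1_j,\vec{m}^2_j\}$, and that posteriors are independent across $j$. Consequently the posterior support of $\vec{x}=\sum_j\vec{x}_j$ is contained in the affine set $\sum_j\vec{m}^1_j+\mathrm{span}_{\mathbb{F}_2}\{\vec{m}^1_j+\vec{m}^2_j\}_{j=1}^k$, which has at most $2^k$ elements. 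This gives $H(\vec{x}\mid\mathrm{obs})\le k$, and combined with $H(\vec{x})=n$ from the uniform-input assumption it yields $I(\vec{x};\mathrm{obs})\ge n-k$.

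The main technical care needed is in part (i), where one must verify that tracing out the other pairs truly decouples the $s_j$'s and reduces to the single-variable analysis of Scheme~\ref{schsub}; part (ii) is essentially combinatorial once the explicit measurement and the two-element posterior per $j$ have been identified.
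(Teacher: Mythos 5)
Your part (i) has a real gap. Reducing to the marginal state on the $2k$ qubits of block $i$ only \emph{lower}-bounds Bob's distinguishing power: the partial trace can only decrease trace distance, so showing the marginals are $\tfrac{1}{2^k}$-close does not show that Bob, who holds all $2kn$ qubits, cannot do better. The danger is precisely the shared $s_j$'s: the joint state is $E_{(s_j)}\bigl[\rho^{(i)}_{x_i,(s_j)}\otimes\rho^{(-i)}_{(s_j)}\bigr]$, and if the other blocks carried any information about $(s_j)$, Bob could use them to partially unlock the encoding of block $i$. The paper closes this by observing that, for uniformly random $x_{i'}$ ($i'\neq i$), each other block is exactly maximally mixed \emph{for every fixed} $(s_j)$ (the withheld teleportation key randomizes one qubit of each pair and the uniform share randomizes the other), so the full state factorizes and the marginal analysis is tight; your proposal flags "decoupling" but frames it as a property of the marginal rather than of the full state, which is where the work actually is. A second, smaller point: the theorem asserts the learnable information \emph{is} $\tfrac{1}{2^k}$ bits, while you only derive the trace distance and assert it upper-bounds the information (true, via the bound of Jensen--Shannon divergence by total variation, but unstated); the paper instead computes the accessible information exactly by noting that the optimal measurement is the fixed $\Zgate$/$\Xgate$-basis measurement, in which the problem is classical and $x_i$ is revealed with probability exactly $\tfrac{1}{2^k}$ and otherwise perfectly hidden.

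Your part (ii) is correct and takes a genuinely different route. The paper uses the same fixed-basis measurement but then argues indirectly: if Alice additionally revealed the $k$ bits $(s_j)$, Bob would recover all $n$ input bits from his (now classical) records, and since locking cannot occur for classical data \cite{DHL04}, the extra $k$ bits can raise his information by at most $k$, so he must already hold $n-k$ bits. Your argument is a direct Bayesian count: for each $j$ one of $\vec{m}^1_j,\vec{m}^2_j$ equals the share vector $\vec{x}_j$ and the other is fresh noise, the posterior of $\vec{x}_j$ is uniform on that two-element set independently across $j$, hence the posterior of $\vec{x}=\sum_j\vec{x}_j$ is supported on an affine set of size at most $2^k$ and $H(\vec{x}\mid\mathrm{obs})\le k$. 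This is more elementary and self-contained (no appeal to the no-locking property of classical correlations), and it additionally exhibits the posterior structure explicitly, which is useful for the follow-up discussion of how Bob exploits knowledge of some $x_i$ to learn $(s_j)$; the paper's route is shorter and generalizes more readily to settings where the post-measurement data is classical but the exact posterior is harder to describe.
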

\begin{proof}
(i) It is shown in the discussion after Scheme~\ref{schsub} that the trace distance of Bob's two reduced density matrices (on $2k$ qubits corresponding to $x_i$) for the different values of a particular $x_i$ is exactly $\frac{1}{2^k}$. Bob actually has a very good measurement to find out about $x_i$ in Scheme~\ref{schsub2} (also for Scheme~\ref{schsub}). He just needs to measure the first qubit in each pair of qubits in the $\Zgate$ basis, and the second qubit in the pair in the $\Xgate$ basis. Such measurement basis in the combined Hilbert space of $2k$ qubits is a fixed basis, in which the density matrix is diagonal. The mutual information between Alice and Bob (for uniformly random input $x_i$) by measuring the $2k$ qubits in this way is exactly $\frac{1}{2^k}$ bits: it is because the only cases in which the $x_i$ can be determined are that the qubit pairs are all in the state $\ket{0+}$ or all in the state $\ket{1-}$, and in all other cases (which are perfectly distinguishable from the two cases just mentioned) the two possible bit values of $x_i$ are equally likely. An alternative way to calculate the amount of mutual information is by using the fact that the Holevo quantity is equal to the classical mutual information in the current case, see Appendix~\ref{appd3}.

Since the input distribution is uniformly random, each $x_i$ is $1$ with probability exactly $\frac{1}{2}$, then, if Bob had no prior knowledge about the $\{s_j\}$, the $2kn$ qubits received by Bob are in a maximally mixed state in his view, for each possible combination of the values of $s_j$. In other words, Bob's information about the $\{s_j\}$ cannot be increased at all if he had zero information to start with. And he indeed initially has zero information about the $\{s_j\}$. Hence, Bob is completely ignorant of the $\{s_j\}$ even after his measurements. Therefore, his best strategy to learn about one of the bits $x_i$ is to learn it by measuring the corresponding $2k$ qubits. As mentioned above, the information about the bit $x_i$ learnable by Bob is $\frac{1}{2^k}$ bits for uniformly random input. By doing the same measurement for all input bits, and looking at the correlations among measurement outcomes across different $i$, he may learn about the joint distribution of the input bits.

(ii) If Alice further sends $k$ bits of information about the $\{s_j\}$ to Bob, he would know all information about the input $\{x_i\}$ by the following fixed-basis measurement below. He just needs to measure the first qubit in each pair of qubits in the $\Zgate$ basis, and the second qubit in the pair in the $\Xgate$ basis. The fixed-basis measurement and the construction of the scheme imply that the state received by Bob is effectively classical, and since locking does not happen for classical states \cite{DHL04}, we get that Bob's information about the $\{x_i\}$ before being told anything about the $\{s_j\}$ is at least $n-k$ bits.
\end{proof}

For more details about the classical mutual information when Alice's input is uniformly random, see Appendix~\ref{appd3}.

As a step towards studying the security under the second criterion, we prove the following result.
\begin{theorem}\label{thm6}
In Scheme~\ref{schsub2}, the trace distance between Bob's reduced density matrices for two different input binary strings of length $n$ is a constant dependent only on $n$ and $k$.
\end{theorem}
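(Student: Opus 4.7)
The plan is to exhibit a hidden affine symmetry of the scheme: for every affine bijection $\phi(\vec{x})=L\vec{x}\oplus\vec{a}$ on $\mathbb{F}_2^n$ (with $L\in GL_n(\mathbb{F}_2)$ and $\vec{a}\in\mathbb{F}_2^n$), I will construct a unitary $U_\phi$ on Bob's $2nk$ qubits such that $\rho(\phi(\vec{x}))=U_\phi\,\rho(\vec{x})\,U_\phi^\dagger$ for every $\vec{x}$. Because the affine group $\mathrm{Aff}_n(\mathbb{F}_2)=\mathbb{F}_2^n\rtimes GL_n(\mathbb{F}_2)$ acts transitively on ordered pairs of distinct vectors of $\mathbb{F}_2^n$ (the $GL_n$ factor is transitive on $\mathbb{F}_2^n\setminus\{\vec{0}\}$, and translations then move any one vector to any other), unitary invariance of $\|\cdot\|_1$ will force $\|\rho(\vec{x})-\rho(\vec{x}')\|_1$ to take the same value for every pair with $\vec{x}\ne\vec{x}'$, depending only on $n$ and $k$.

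First I would record Bob's state. After Alice's teleportations with the stipulated withholdings, averaging over the shared bit $s_j$ yields, in each layer $j$,
\[
\rho_j(\vec{x}_j)=\tfrac{1}{2}\bigotimes_i\bigl(\proj{x_{ij}}\otimes\tfrac{I}{2}\bigr)+\tfrac{1}{2}\bigotimes_i\bigl(\tfrac{I}{2}\otimes\proj{\epsilon_{x_{ij}}}\bigr),
\]
where the two summands correspond to $s_j=0$ and $s_j=1$, with $\proj{x_{ij}}$ a $Z$-eigenstate on the $A$-qubit of pair $i$ and $\proj{\epsilon_{x_{ij}}}$ an $X$-eigenstate on the $B$-qubit; and $\rho(\vec{x})=\mathbb{E}_{\{x_{ij}\}:\,\sum_j x_{ij}=x_i\,\forall i}\bigotimes_j\rho_j(\vec{x}_j)$. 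For the shift piece $\phi(\vec{x})=\vec{x}\oplus\vec{a}$, I take $V_{\vec{a}}=\prod_{i:\,a_i=1}(\Xgate_{A_{i,1}}\otimes\Zgate_{B_{i,1}})$ acting only in layer~$1$; since $\Xgate$ flips $\ket{x}_Z$ and $\Zgate$ flips $\ket{\epsilon_x}_X$, it sends $\rho_1(\vec{x}_1)$ to $\rho_1(\vec{x}_1\oplus\vec{a})$ in both summands, and because flipping $x_{i,1}$ for $i\in\supp(\vec{a})$ is a measure-preserving bijection between the ensemble with constraint $\sum_j x_{ij}=x_i$ and the one with constraint $\sum_j x_{ij}=x_i\oplus a_i$, one obtains $V_{\vec{a}}\rho(\vec{x})V_{\vec{a}}^\dagger=\rho(\vec{x}\oplus\vec{a})$.

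For a linear $L\in GL_n(\mathbb{F}_2)$, I decompose $L$ into elementary row operations. A transposition $i\leftrightarrow i'$ is realised layer-by-layer by swapping the pairs $(A_{ij},B_{ij})\leftrightarrow(A_{i'j},B_{i'j})$; the elementary addition $x_i\leftarrow x_i\oplus x_{i'}$ is realised on each layer $j$ by $\cnot_{A_{i',j}\to A_{i,j}}\otimes\cnot_{B_{i,j}\to B_{i',j}}$. The reversed control-target on the $B$-side exploits the standard duality that a computational-basis $\cnot$ is the transposed $\cnot$ in the Hadamard basis, so the same operator implements the desired $\mathbb{F}_2$-addition both on the $\ket{x_{ij}}_Z$-factors of the first summand and on the $\ket{\epsilon_{x_{ij}}}_X$-factors of the second, while acting trivially on the complementary maximally mixed factors. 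A short direct check then gives $U_L\rho_j(\vec{x}_j)U_L^\dagger=\rho_j(L\vec{x}_j)$; re-parametrising the ensemble average by $\vec{y}_j=L\vec{x}_j$ (a bijection sending the constraint $\sum_j\vec{x}_j=\vec{x}$ to $\sum_j\vec{y}_j=L\vec{x}$) upgrades this to $U_L\rho(\vec{x})U_L^\dagger=\rho(L\vec{x})$, and the composite $U_\phi=V_{\vec{a}}\,U_L$ intertwines $\phi$. Finally, given pairs $(\vec{x},\vec{x}')$ and $(\vec{y},\vec{y}')$ with $\vec{x}\ne\vec{x}'$ and $\vec{y}\ne\vec{y}'$, transitivity of $GL_n(\mathbb{F}_2)$ on nonzero vectors furnishes an $L$ with $L(\vec{x}\oplus\vec{x}')=\vec{y}\oplus\vec{y}'$, and setting $\vec{a}=\vec{y}\oplus L\vec{x}$ gives an affine $\phi$ with $\phi(\vec{x})=\vec{y}$ and $\phi(\vec{x}')=\vec{y}'$, whence $\|\rho(\vec{x})-\rho(\vec{x}')\|_1=\|\rho(\vec{y})-\rho(\vec{y}')\|_1$.

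The main obstacle I expect is the intertwining check for the elementary addition gadget: verifying that the pair of oppositely-oriented $\cnot$s on the $A$- and $B$-sides really transports $\rho_j(\vec{x}_j)$ to $\rho_j(L\vec{x}_j)$ in \emph{both} summands of the $s_j$-mixture without any cross-term. Because the two summands occupy complementary halves of the $A/B$ tensor factorisation and each $\cnot$ acts as the identity on the maximally mixed factor belonging to the opposite summand, the check is short, but the orientations must be tracked carefully so that the $Z$-basis action on the $A$-side and the $X$-basis action on the $B$-side compose to the same $\mathbb{F}_2$-linear map on the encoded bits. Apart from this, the argument reduces to $\mathbb{F}_2$-linear-algebra bookkeeping and an appeal to unitary invariance of the trace norm.
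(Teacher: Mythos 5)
Your proposal is correct and takes essentially the same route as the paper: the paper likewise conjugates Bob's state by oppositely-oriented $\cnot$ pairs on the $A$- and $B$-qubits to realise elementary $\mathbb{F}_2$-additions of input bits in both encoding bases, uses local flips on a single pair [$\R_y(\pi)\ox\R_y(\pi)$ rather than your $\Xgate\ox\Zgate$, which differ by an operator acting trivially on the relevant states] for translations, and concludes by unitary invariance of the trace norm. Your write-up merely packages the same unitaries as a transitive action of $\mathrm{Aff}_n(\mathbb{F}_2)$ on ordered pairs of distinct strings and spells out the two-summand intertwining check that the paper leaves implicit.
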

\begin{proof}
Without loss of generality, let the first input binary string be the all-zero string. Now assume the second input binary string contains only one $1$, then without loss of generality it can be assumed to be $0\dots 001$. Let the trace distance between Bob's reduced density matrices for these two input strings be $c_0$, which depends on $n$ and $k$. Now we consider the case that the second input string contains two $1$'s, and without loss of generality the string can be assumed to be $0\dots011$. The trace distance between Bob's reduced density matrices for the all-zero string and the above string can be expressed as $\frac{1}{2}\tr\sqrt{(\rho-\sigma)^2}$. For any unitary $U$ acting on such reduced density matrices, the expression above is equal to $\frac{1}{2}\tr\sqrt{(U\rho U^\dag-U\sigma U^\dag)^2}$. Now let $U$ be the following operation: Bob does $k$ $\cnot$ gates with the controlling qubit being the first qubit in each of the $k$ pairs of qubits corresponding to the input bit $x_n$, and the target qubit being the first qubit in each of the $k$ pairs of qubits corresponding to the input bit $x_{n-1}$; Bob also does $k$ $\cnot$ gates with the controlling qubit being the second qubit in each of the $k$ pairs of qubits corresponding to the input bit $x_{n-1}$, and the target qubit being the second qubit in each of the $k$ pairs of qubits corresponding to the input bit $x_n$. The $U$ effectively changes the input $0\dots001$ to $0\dots011$, while leaving the all-zero input string unchanged. Thus, the trace distance between Bob's reduced density matrices for the all-zero string and the $0\dots011$ is equal to $c_0$. By doing similar unitaries for other pairs of states corresponding to the all-zero input string and some nonzero input string, we get that the trace distance between Bob's reduced density matrices for the all-zero string and any nonzero string is equal to $c_0$. Then, noting that $R_y(\pi)$ on a pair of qubits corresponding to one of the input bits $x_i$ effectively flip the input bit $x_i$, we have that the trace distance between Bob's reduced density matrices corresponding to any pair of different input binary strings is $c_0$. This completes the proof.
\end{proof}

Numerical calculations suggest that the trace distance between Bob's reduced density matrices for the all-zero input string and for a mixture of other input strings is also equal to the same constant in Theorem~\ref{thm6}.

Now we discuss whether Scheme~\ref{schsub2} is secure under the IND-CPA criterion. While it is expected that the all-zero and the all-one inputs are more distinguishable compared to the case of only one bit input ($0$ or $1$), which has been confirmed by numerical calculations, it may seem counter-intuitive that many identical bits in the case of the zero string versus $0\dots001$ still help distinguishability compared to the one-bit input case. But there is an easy explanation for this phenomenon: since Bob knows the value of the input bit $x_i$ for a particular $i$, he may do some measurements in the $2k$ qubits corresponding to $x_i$ so that he learns something about Alice's choices of the encoding bases, i.e. the $s_j$, and such partial information helps the distinguishability when he measures other qubits. A very good measurement strategy for Bob is to measure the first qubit in each pair of qubits in the $\Zgate$ basis, and the second qubit in the pair in the $\Xgate$ basis. In this way, since he knows the value of $x_i$ for any $i>1$, he may exclude about a half of the possible strings $(s_j)$ by measuring the qubits for a particular $i$. After he repeats this procedure for many different $i$, he gets a lot of information about $(s_j)$ such that he can measure the qubits corresponding to $x_1$, and get quite a good amount of information about $x_1$, which determines the input is the zero string or the $0\dots001$. Since his ignorance about $(s_j)$ decreases exponentially with $n$, we may expect that he can distinguish the zero string or the $0\dots001$ with probability near $1$, when $n$ is large. And according to Theorem~\ref{thm6}, this means the trace distance between any two different inputs is nearly $1$ for large $n$, implying that the Scheme~\ref{schsub2} is not secure under the IND-CPA criterion. As a result, the overall interactive QHE scheme based on Scheme~\ref{schsub2} is not secure under the q-IND-CPA criterion in \cite{bj15}.

By using Scheme~\ref{schmain} as the main scheme, we obtain an interactive QHE scheme. The security depends on the parameter range: $k<n$ is guarantees circuit privacy, by using many $\tgate$ gates in the circuit and allowing recompilation of the circuit, but the data privacy is not good; $k\gg n$ is good for data privacy in the case of uniformly random data, but the circuit privacy is not good by itself, but similar to the last section, we may still in principle use Scheme~\ref{schcheck} for improving circuit privacy by embedded verifications, at some minor cost of affecting data privacy. Some further discussions about the uses of Scheme~\ref{schsub2} in bipartite secure computation and interactive QHE are in Sec.~\ref{sec8}.

A last comment is that the schemes in Sec.~\ref{sec5} may be improved by partially using the method in this section: in Scheme~\ref{schsub}, all the $s_{ij}$ are independently random. If we change the choices such that about $m$ bits $s_{ij}$ are the same, for the same $j$ but different $i$, then we may reduce the amount of classical communication from Bob to Alice by a factor of about $m$. This should help circuit privacy when Scheme~\ref{schsub} is used as a subprocedure in Scheme~\ref{schmain} and later in Scheme~\ref{schcheck}, while the data privacy is somewhat compromised, but if $m$ is small enough, say $m=o(\log_2 k)$, then the data security is still of the IND-CPA type, according to the remarks after Theorem~\ref{thm6} above.

\section{A scheme for computing classical linear polynomials based on information locking}\label{sec7}

In this section, we introduce Scheme~\ref{schsub3}, which is a variant of Scheme~\ref{schsub2} and Scheme~\ref{schsub}. It makes use of the method of information locking by quantum techniques \cite{DHL04}. An application of Scheme~\ref{schsub3} is discussed in the next section.

\begin{algorithm*}[htb]
\caption{A scheme for computing classical linear polynomials using quantum data locking}\label{schsub3}
{\bf The type of allowed circuits:} those calculating a linear polynomial $y=(c+\sum_{i=1}^n a_i x_i)\,\,{\rm mod}\,\,2$, where $x_i$ are bit values of Alice's classical input, and $a_i$ and $c$ are constant bits known to Bob.\\
\begin{enumerate}
\item Alice and Bob agree on a positive integer $k$ related to the security of the scheme. The suitable choice of $k$ is $O(n)$ in general, but may be chosen to be $O(\log n)$ if correlations among input bits may be leaked.
\item
\begin{enumerate}
\item For each $i$, Alice randomly chooses $k$ bits $x_{ij}$ satisfying that $x_i=\sum_{j=1}^k x_{ij} \mod 2$. She generates independent unbiased random bits $s_j$ and $t_j$, where $j=1,\dots,k$. The $s_j$, $t_j$ are independent from $x_{ij}$.
\item For each $(i,j)$ pair, Alice encodes $x_{ij}$'s value $0$ into $\ket{0}$ (when $s_j=0$) or $\ket{+}$ (when $s_j=1$), and $x_{ij}$'s value $1$ into $\ket{1}$ (when $s_j=0$) or $\ket{-}$ (when $s_j=1$).
\item For each $j\in\{1,\dots,k\}$, Alice encodes $t_j$ into a qubit using the rule in the previous step.
\item Alice teleports the $k(n+1)$ qubits to Bob.
\end{enumerate}

\item
\begin{enumerate}
\item Bob denotes $w:=\sum_{i=1}^n a_i \mod 2$.
\item For each $j\in\{1,\dots,k\}$, Bob does the following: if all $a_i$ are zero, he sets $u_j:=0$ and $v_j:=0$. Otherwise, he arranges the received qubits corresponding to $x_{ij}$ satisfying $a_i=1$ into pairs (with one unpaired qubit when $w=1$), and does a $\cnot$ gate on each pair, with the qubit of smaller index $i$ as the control; if $w=1$, Bob does a $\cnot$ gate on the unpaired qubit with the qubit corresponding to $t_j$, with the former as the control. Bob measures the first qubit in each pair from in the $\Xgate$ basis, and records the XOR (sum modulo $2$) of the results of all such measurements as $u_j$. He measures the second qubit in each pair in the $\Zgate$ basis, and records the XOR of the results of all such measurements as $v_j$. He sets $R_j:=u_j \oplus v_j$ (where $\oplus$ means XOR).
\item He sends the bits $R_j$ ($j=1,\dots,k$) and the bit $w$ to Alice.
\end{enumerate}

\item Alice calculates $y_0:=\sum_{j=1}^k (s_j R_j+t_j w) \mod 2$. Bob calculates $(c+\sum_{j=1}^k u_j) \mod 2$, and sends the resulting bit to Alice. Alice takes the XOR of $y_0$ and the last sent bit from Bob, and the result is the final output.
\end{enumerate}
\end{algorithm*}

Similar to the last section, from the proof of Theorem~\ref{thm4}, a \emph{cheating} Alice learns at most $k + 2$ bits of information about Bob's circuit (i.e. about $a_j$ and the constant $c$). If the last bit of communication is omitted, which may be the case when Scheme~\ref{schsub3} is used in some higher-level schemes such as Scheme~\ref{schmain}, Alice learns at most $k+1$ bits of information about Bob's coefficients $a_j$, and in such case the result is distributed as the XOR of two remote bits. It should be stressed that Alice's learning of the circuit would affect the correctness of evaluation of the distributed result, for similar reasons as discussed after Scheme~\ref{schsub}. If the result is not distributed, but completely on Alice's side, and if $k+2<n$, she still cannot learn all information about Bob's input. If Alice wants to ensure that the evaluation (with distributed result when $k+2\ge n$) is exactly correct, she cannot learn any information about the circuit apart from what is learnable from the output of the evaluation. But she can learn about Bob's input up to the amount mentioned above if she does not care about the correctness of evaluation of the distributed result.

For the data privacy of Scheme~\ref{schsub3}, we consider two criteria: first, when the input is \emph{a priori} uniformly random in the other party's view, the amount of information learnable by Bob is small. The second is the usual IND-CPA type of criterion: that any two classical input strings are hardly distinguishable by Bob. The security under the first criterion is partially discussed in Theorem~\ref{thm7} and in Appendix~\ref{appd3}. It is better than Scheme~\ref{schsub2}, but the precise level of security for large $k$ still needs further study. Later we show that the data security of Scheme~\ref{schsub3} does not satisfy the second criterion, at least when $k\le n$.

\begin{theorem}\label{thm7}
The following statements hold for Scheme~\ref{schsub3} in the case that Alice's input data is uniformly distributed in Bob's view.\\
(i) The information learnable by a cheating Bob about each bit of Alice's input is $\Omega(\frac{1}{2^k})$ bits.\\
(ii) The information learnable by a cheating Bob about the joint distribution of the input bits is at least $\lfloor\frac{n}{2}\rfloor-k$ bits.
\end{theorem}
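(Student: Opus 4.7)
The plan is to exhibit explicit measurement strategies attaining the two stated lower bounds, following the template used in the proof of Theorem~\ref{thm5}. For part (i), I would have Bob measure each of the $k$ qubits carrying $x_{i,1},\ldots,x_{i,k}$ in the eigenbasis of the Hadamard observable $\hgate=(\Xgate+\Zgate)/\sqrt{2}$. The symmetric placement of its eigenvectors halfway between the $Z$- and $X$-bases gives $|\langle h_\pm|0\rangle|^{2}=|\langle h_\pm|+\rangle|^{2}$, so the $\pm 1$-valued outcome $m_j$ has a distribution that depends on $x_{ij}$ but \emph{not} on the basis bit $s_j$, with $E[m_j\,|\,x_{ij}]=(-1)^{x_{ij}}/\sqrt{2}$. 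Conditional independence of the $m_j$'s together with $x_i=\bigoplus_j x_{ij}$ then give $E[\prod_j m_j\,|\,x_i]=(-1)^{x_i}/2^{k/2}$, so $\prod_j m_j\in\{\pm 1\}$ is a guess for $x_i$ with bias $2^{-k/2-1}$, and the standard expansion $1-H(\tfrac{1}{2}+\epsilon)=\Theta(\epsilon^{2})$ yields $\Theta(2^{-k})=\Omega(2^{-k})$ bits of mutual information about $x_i$.

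For part (ii), I would run a quantum-locking argument modelled on the proof of Theorem~\ref{thm5}(ii). Let $Y=(x_1\oplus x_2,\,x_3\oplus x_4,\ldots)$ collect the $\lfloor n/2\rfloor$ XORs of consecutive bits, and set $S=(s_1,\ldots,s_k)$. The classical measurement $M$ is the following: for each $l\in\{1,\ldots,\lfloor n/2\rfloor\}$ and each $j\in\{1,\ldots,k\}$, Bob performs a Bell-basis measurement on the pair of qubits encoding $x_{2l-1,j}$ and $x_{2l,j}$. A short case analysis of the eight product encodings shows that the outcomes $\Phi^{+}$, $\Psi^{-}$, $\Phi^{-}$, $\Psi^{+}$ certify the parity $x_{2l-1,j}\oplus x_{2l,j}$ to be $0$, $1$, $s_j$, $1\oplus s_j$ respectively. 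Hence knowing $S$ recovers every individual parity from $M$, and summing over $j$ recovers every entry of $Y$; thus $I(M,S;Y)=H(Y)=\lfloor n/2\rfloor$ under the uniform input prior.

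Because $M$ is a purely classical record, no quantum locking can amplify the effect of the side information $S$: the chain rule $I(M,S;Y)=I(M;Y)+I(S;Y\,|\,M)$ and $I(S;Y\,|\,M)\le H(S)\le k$ give $I(M;Y)\ge\lfloor n/2\rfloor-k$, and the data-processing inequality applied to the deterministic map $(x_1,\ldots,x_n)\mapsto Y$ yields the asserted $I(M;(x_1,\ldots,x_n))\ge\lfloor n/2\rfloor-k$ lower bound on Bob's accessible information about the joint distribution. The main obstacle I anticipate is the Bell-basis case analysis in (ii), namely identifying a fixed classical measurement whose outcomes together with a small classical key completely reconstruct a large function of the input; once that is in place the remainder is the no-locking template already used in the proof of Theorem~\ref{thm5}(ii).
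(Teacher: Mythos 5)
Your proposal is correct. For part (ii) it is in substance the same argument as the paper's: the paper has Bob apply a $\cnot$ to each pair of qubits carrying $x_{2l-1,j}$ and $x_{2l,j}$ and then measure the control in the $\Xgate$ basis and the target in the $\Zgate$ basis, which is exactly the Bell measurement in your case analysis (and your table of which outcomes certify parity $0$, $1$, $s_j$, $1\oplus s_j$ checks out); the paper then invokes the no-locking property of classical records \cite{DHL04} to subtract the revealed key bits, which is precisely your chain-rule step $I(M;Y)\ge I(M,S;Y)-H(S)\ge\lfloor n/2\rfloor-k$. Your version is marginally cleaner in that you pair only $\lfloor n/2\rfloor$ disjoint couples and never involve the $t_j$ qubits, so you reach $\lfloor n/2\rfloor-k$ directly, whereas the paper first obtains $\lceil n/2\rceil-k-1$ and then argues the $\sum_j t_j$ bit is unnecessary for even $n$. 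For part (i) your route is genuinely different and, as written, more complete: the paper computes the trace distance $2^{-k/2}$ between Bob's two conditional states and asserts that measuring each qubit ``in a best basis'' yields $\Omega(2^{-k})$ bits, leaning partly on numerical evidence; you instead exhibit the explicit intermediate (Breidbart-type) basis, verify that its outcome distribution is independent of the basis key $s_j$, and compute the exact bias $2^{-k/2-1}$ of the product estimator, after which $1-H(\tfrac12+\epsilon)=\Theta(\epsilon^2)$ gives the stated bound rigorously via data processing. Both arguments are valid; yours has the merit of turning the lower bound in (i) into a fully explicit attack with a closed-form advantage.
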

\begin{proof}
(i) It can be found by explicit calculation that if $k=1$, the trace distance of Bob's two reduced density matrices for a particular $x_i$ is exactly $\frac{1}{\sqrt{2}}$.
For larger $k$, since the encoded states are independent among the qubits with different $j$ (provided that $x_i$ is unknown to Bob), the trace distance of Bob's two reduced density matrices for a particular $x_i$ is exactly $\frac{1}{2^{k/2}}$. Hence, by measuring each qubit in a best basis, the mutual information between Alice and Bob (for uniformly random input $x_i$) by measuring the $k$ qubits is $\Omega(\frac{1}{2^k})$ bits, and numerical calculations suggest that the mutual information and the Holevo bound (limited to one particular $x_i$) may be $\Theta(\frac{1}{2^k})$ bits.

Since the input distribution is uniformly random, each $x_i$ is $1$ with probability exactly $\frac{1}{2}$, then, if Bob had no prior knowledge about the $\{s_j\}$, the $k(n+1)$ qubits received by Bob are in a maximally mixed state in his view, for each possible combination of the values of $s_j$. In other words, Bob's information about the $\{s_j\}$ cannot be increased at all if he had zero information to start with. And he indeed initially has zero information about the $\{s_j\}$. Hence, Bob is completely ignorant of the $\{s_j\}$ even after his measurements. Therefore, his best strategy to learn about one of the bits $x_i$ is to learn it by measuring the corresponding $k$ qubits. As mentioned above, the information about the bit $x_i$ learnable by Bob is $\Omega(\frac{1}{2^k})$ bits for uniformly random input, which may be a tight bound. By doing the same measurement for all input bits, and looking at the correlations among measurement outcomes across different $i$, he may learn about the joint distribution of the input bits.

(ii) Bob may choose to do $\cnot$ gates between pairs of qubits with the same $j$, and measure them in the $\Xgate$ or $\Zgate$ bases as in the scheme. If Alice further sends $k$ bits of information about the $\{s_j\}$ to Bob, and one bit about $\sum_{j=1}^k t_j \mod 2$, he would know all information about the input $\{(x_i+x_{i+1})\mod 2: i\,\,\rm{odd}\,\,\rm{and}\,\,1\le i\le n\}$, which amounts to $\lceil\frac{n}{2}\rceil$ bits. Note that Bob had done a measurement in a fixed bases, so the state he received before measurement is effectively classical. Since locking does not happen for classical states \cite{DHL04}, we get that Bob's information about the $\{x_i\}$ before being told anything about the $\{s_j\}$ and the $\sum_{j=1}^k t_j \mod 2$ is at least $\lceil\frac{n}{2}\rceil-k-1$ bits. And since when $n$ is even, Bob does not need the information about the bit $\sum_{j=1}^k t_j \mod 2$, the information obtainable by Bob in the scheme is actually at least $\lfloor\frac{n}{2}\rfloor-k$ bits.
\end{proof}

Since the quantity in Theorem~\ref{thm7}(ii) is less than that in Theorem~\ref{thm5}, we may regard Scheme~\ref{schsub3} as having better data privacy than Scheme~\ref{schsub2}. But for large $k$, say $k=n$, the amount of information learnable by Bob may be not much smaller than that in Scheme~\ref{schsub2}; further work is needed to give a conclusive statement on this. Note that the security in Scheme~\ref{schsub3} is not always composable, due to the locking property that it relies on: some further small amount of communication from Alice to Bob may reveal much information about her input. This would affect the use of Scheme~\ref{schsub3} in some \emph{cascaded} schemes that aims to achieve interactive QHE, where the data privacy is required to be near-perfect. An example of an ``cascaded scheme'' is that the local evaluation of a linear polynomial near the end of Scheme~\ref{schsub3} is replaced with a bipartite protocol which is an instance of Scheme~\ref{schsub3} (or Scheme~\ref{schsub2} or Scheme~\ref{schsub}), and the initiating party in both the top level and the lower level is Bob rather than Alice. A different cascaded scheme which aims for near-perfect data privacy is described in Sec.~\ref{sec8}. On the other hand, for the bipartite secure computing problems which do not require near-perfect security for either party, the Scheme~\ref{schsub3} would find it use: the simplest example is the computation of the bipartite classical inner-product function, which requires only one use of Scheme~\ref{schsub3}. If we choose $k=1$, Alice would leak at most $\frac{n}{2}$ bits of information according to \cite{DHL04}, and Bob would leak at most $3$ bits of information. This level of security can be regarded as nontrivial because $\frac{n}{2}+3<n$ for large $n$.

The data privacy of Scheme~\ref{schsub3} does not hold under the second criterion, at least when $k\le n$, for the similar reason as in the last section. Suppose the two input strings for Bob to distinguish are the $0\dots000$ versus $0\dots001$. When $k\le n$, he can choose to measure some qubits corresponding to $i<n$, to learn about the $\{s_j\}$, and then use such information and some measurement on the remaining qubit(s) corresponding to $i=n$ to learn about which input string it is with some non-negligible probability. Thus, the Scheme~\ref{schsub2} is not secure under the IND-CPA criterion when $k\le n$.

Although Scheme~\ref{schsub3} is somewhat better than Scheme~\ref{schsub2} in security, it may be harder to implement. In Scheme~\ref{schsub3}, Bob needs to perform $\cnot$ gates on some pairs of qubits determined by his input. This poses problems in experimental implementation, particularly when using photons, which hardly interact with each other. But Bob does not have to send back quantum states to Alice. In comparison, Bob only does $\cnot$ gates on fixed pairs of qubits in Scheme~\ref{schsub2}, but needs to send or teleport back some quantum states to Alice.

\section{Combined use of the previously proposed schemes}\label{sec8}

In this section, we first present the following Scheme~\ref{schsub4} for evaluating classical linear polynomials that makes two uses of Scheme~\ref{schsub3}. Such scheme can be used as a subprocedure in Scheme~\ref{schmain}, giving rise to an interactive QHE scheme with partial data and circuit privacy without resorting to allowing early terminations, but it only works for the case that the data is almost uniformly distributed, and when the circuit is short (see remarks below). Note that by modifying the last steps in the similar way as in previous schemes, the scheme has a version with distributed output (the XOR of two distant bits).

\begin{algorithm*}[htb]
\caption{A scheme for computing classical linear polynomials by two uses of Scheme~\ref{schsub3}}\label{schsub4}
{\bf The type of allowed circuits:} those calculating a linear polynomial $y=(c+\sum_{i=1}^n a_i x_i)\,\,{\rm mod}\,\,2$, where $x_i$ are bit values of Alice's classical input, and $a_i$ and $c$ are constant bits known to Bob.\\
\begin{enumerate}
\item Run a partial instance of Scheme~\ref{schsub3} up to the step of just before Bob sends Alice any message, with $k=\lceil\gamma n\rceil$, where the constant $\gamma<2$ and $\gamma>1$ (e.g. $\gamma=\frac{3}{2}$ is a valid choice).

\item Instead of the local evaluation of a classical linear polynomial $\sum_{j=1}^k s_j R_j +t_j w \mod 2$ in the instance of Scheme~\ref{schsub2} above, run a bipartite protocol: it is almost an instance of Scheme~\ref{schsub3} with $k'=1$  (we use $k'$ here to distinguish from the $k$ in Step 1), with Bob as the first party and Alice as the second party, but with the last sending of the mask bit omitted.

\item Alice and Bob perform the remaining operations in the instance of Scheme~\ref{schsub3}. The distributed outcome in Step 2 is treated by Bob sending Alice his bit in the distributed pair of bits.
\end{enumerate}
\end{algorithm*}

A brief security analysis of the scheme is as follows. The information about Alice's input $\{x_i\}$ leaked to Bob in Step 1 is $O(\frac{{\rm poly}(n)}{2^{k-n}})$ bits, due to that $k=\gamma n$ with $\gamma$ some constant between $1$ and $2$, and that the data privacy in Scheme~\ref{schsub3} can be regarded as being similar to but slightly better than Scheme~\ref{schsub2}, and the latter is analyzed in Appendix~\ref{appd3}. Such amount of leakage of the data is exponentially small. Alice's sending of up to $2$ bits in Step 2 may reveal some information about the $s_j$ (and hence about the $x_{ij}$ in the higher level scheme) in the instance of Scheme~\ref{schsub}, but since $k=\gamma n$ and that Alice's inputs in Step 2 are the $s_j$, no definitive information about $\{x_i\}$ is leaked, and when $n$ is large, the amount of leaked information about $\{x_i\}$ is quite small, even if Bob cheats. For $\gamma=\frac{3}{2}$, we estimate that a cheating Bob may learn an exponentially small amount of information in total about Alice's input. On the other hand, Bob may leak up to half of the input size in Step 2, i.e. at most $\lceil\frac{\gamma n}{2}\rceil$ bits. Thus, about $(1-\frac{\gamma}{2})n$ bits of information about Bob's coefficients $a_i$ is secure from a cheating Alice.

If the use of Scheme~\ref{schsub3} in the Steps 1 and 3 is replaced with Scheme~\ref{schsub2}, the security is harmed in the case that Alice is dishonest: she could use some entangled states on pairs qubits (e.g. same as that mentioned in the analysis of Scheme~\ref{schsub}) to find out Bob's input from the returned state from Bob, although this would make the computation result incorrect. In contrast, if  Scheme~\ref{schsub3} is used in the top level as described above, Bob does not send back any quantum state except in the Step 2, so this makes the input of Bob much safer.  Interestingly, there is a classical analog of Scheme~\ref{schsub2} (which can also be regarded as a classical analog of Scheme~\ref{schsub3}), such that when it is used in the Steps 1 and 3 in Scheme~\ref{schsub4} in place of Scheme~\ref{schsub3}, the overall circuit privacy is about the same, while the data privacy weakens slightly; the overall scheme becomes easier to implement since a large part becomes classical. The description of such classical scheme is listed as Scheme~\ref{schsubclassical} below. Note that by modifying the last steps in the similar way as in previous schemes, the scheme has a version with distributed output (the XOR of two distant bits).

\begin{algorithm*}[htb]
\caption{A classical scheme for computing classical linear polynomials}\label{schsubclassical}
{\bf The type of allowed circuits:} those calculating a linear polynomial $y=(c+\sum_{i=1}^n a_i x_i)\,\,{\rm mod}\,\,2$, where $x_i$ are bit values of Alice's classical input, and $a_i$ and $c$ are constant bits known to Bob.\\
\begin{enumerate}
\item Alice and Bob agree on a positive integer $k$ related to the security of the scheme. The suitable choice of $k$ is $O(n)$ in general, but may be chosen to be $O(\log n)$ if correlations among input bits may be leaked.
\item
\begin{enumerate}
\item For each $i$, Alice randomly chooses $k$ bits $x_{ij}$ satisfying that $x_i=\sum_{j=1}^k x_{ij} \mod 2$. She generates independent unbiased random bits $s_j$, where $j=1,\dots,k$. The $s_j$ are independent from $x_{ij}$.
\item For each $(i,j)$ pair, Alice prepares two bits. When $s_j=0$, the first bit in the pair is equal to $x_{ij}$, otherwise the second bit in the pair is $x_{ij}$. The other bit in the pair is a completely random bit.
\item Alice sends the resulting $kn$ bits to Bob.
\end{enumerate}

\item
For each $j\in\{1,\dots,k\}$,Bob records the sum modulo $2$ of the first bits in those pairs corresponding to $a_i=1$ with the given $j$ as $u_j$, and the similar quantity for the second bits in the pairs as $v_j$. He sets $R_j:=u_j \oplus v_j$ (where $\oplus$ means XOR). He sends the bits $R_j$ ($j=1,\dots,k$) to Alice.

\item Alice calculates $y_0:=\sum_{j=1}^k s_j R_j \mod 2$. Bob calculates $(c+\sum_{j=1}^k u_j) \mod 2$, and sends the resulting bit to Alice. Alice takes the XOR of $y_0$ and the last sent bit from Bob, and the result is the final output.
\end{enumerate}
\end{algorithm*}

The analysis of the data privacy in Scheme~\ref{schsubclassical} is similar to that of Scheme~\ref{schsub2}, so the data privacy is somewhat weaker than Scheme~\ref{schsub3} under the same parameter $k$. The circuit leakage is upper bounded by the amount of communication. The use of Scheme~\ref{schsubclassical} alone is not very interesting, but the combined use with Scheme~\ref{schsub3} as mentioned above makes it quite interesting. Another possibility is to combine it with classical homomorphic encryption schemes, which is out of the scope of the current work.

Another potential alternative to Scheme~\ref{schsub4} is using Scheme~\ref{schsub3} alone with some suitable choice of $k$, but this method does not enjoy the security advantage from the backward instance of Scheme~\ref{schsub3} in Step 2 above, thus we expect that it cannot fully reproduce the security characteristics of Scheme~\ref{schsub4}.

In the Step 2 of Scheme~\ref{schsub4}, if an instance of Scheme~\ref{schsub3} with $k'>1$ is used, the amount of data leakage is somewhat larger, but the circuit privacy improves a lot. A pessimistic estimate of the amount of information leakage of Bob's circuit is about $\frac{\gamma}{2}n-\frac{k'}{2}+C$ bits, where $C$ is a constant (this is by borrowing from the estimate for Scheme~\ref{schsub2}, since it is believed that the security of Scheme~\ref{schsub3} is not worse than Scheme~\ref{schsub2}), and since $k'$ should be slightly smaller than $(\gamma-1)n$ to make the data secure (c.f. Appendix~\ref{appd3}), we estimate that the information leakage of Bob's circuit is not more than $(\frac{1}{2}+\e)n+C$ bits, where $\e$ is a small positive constant. In other words, $(\frac{1}{2}-\e)n-C$ bits of information about Bob's circuit can be secure.

By using the scheme as a subprocedure in Scheme~\ref{schmain} (which requires slight modifications to the scheme to make the output distributed, i.e. represented by the XOR of two distant bits), we obtain an interactive QHE scheme with partial data and circuit privacy without resorting to allowing early terminations. The data privacy is quite good for short circuits, since each use of the subprocedure by itself leaks an exponentially small amount of information to Bob, while the repeated uses of the subprocedure may give rise to worse data privacy. Because of the construction of Scheme~\ref{schsub3}, the leaked information in one subprocedure may help the cheating Bob significantly in the next instance of the subprocedure, because the information from the previous subprocedure helps Bob learn the $s_j$ in the next subprocedure, which in turn helps him learn some of the variables $x_i$. Similar to other schemes mentioned previously, Alice's or Bob's cheating necessarily affects the correctness of evaluation. Given the above, without further using the verification techniques, the obtained interactive QHE scheme is only good for the case that the number of $\tgate$ gates is $O(n)$, and only for the case that the data is almost uniformly distributed. Because of the limitation to circuit size, the circuit privacy is also not ideal: Bob may hide about $O(n^2)$ bits of information about the circuit. However, by combining with the verification techniques (such as that in Scheme~\ref{schcheck}) in an interactive protocol, better data and circuit privacy may be obtained.

As a way to avoid the requirement that ``the data is almost uniformly distributed'', it is possible to use a different overall structure of QHE than Scheme~\ref{schmain}, namely, keeping data on Alice's side. The techniques can be partially seen in the first version of the current draft and also in some blind quantum computing schemes. We put such method and related experiment in a separate work. Such method still has the similar limitation on the circuit depth (e.g. limitation on the number of $\tgate$ gates).

A lesson learned from the above analysis is that repeated evaluation of linear polynomials with the same variables would often affect data privacy significantly. To avoid this problem, in Sec.~\ref{sec9} below, we consider letting Alice use newly generated intermediate variables as much as possible in the linear polynomials. (Two other protocols based on the similar idea are also considered.) This seems a natural fit for the problem of evaluation of classical functions, rather than (interactive) QHE.

\section{Applications in secure evaluation of classical functions}\label{sec9}

In this section, we consider the problem in which Alice tries to evaluate a classical function known to Bob on her private data, and the ideal goal is to keep the data private from Bob, and the function private from Alice, apart from what can be learnt from the output. The actual security is worse, and there is some limitation to the circuit size. The steps of the scheme are as follows. Note that no initial quantum-one-pad nor its classical counterpart are used.\\

1. Decompose the classical circuit to be evaluated into a form which consists of evaluation of classical linear polynomials and the XOR and AND functions of some bits, where the input and output of such functions may be intermediate variables.

2. For each classical linear polynomial, use the combined scheme for evaluating classical linear polynomials introduced in the last section, but with the sending by Bob of the last bit omitted. Hence, the result is distributed across two parties. Rewrite the later XOR and AND functions and the linear polynomials in terms of the distributed intermediate variables.

3. For the next function in the circuit, continue to do Step 2, until the end of the circuit is reached.\\

\smallskip
Note that the AND functions above refers to the AND between two of Alice's bits, so it is not a special case of a linear polynomial with bipartite input. The data privacy of the scheme depends very much on how many times the (original or intermediate) variables are reused. The more they are reused, the worse the data privacy is. If each variable is reused for at most a constant number of times, then the data privacy is quite good: Bob learns a negligible amount about the variables, unless he chooses to cheat by using trivial programs, so that the intermediate variables carry the equivalent information about the original variables. Hence, the correctness of the scheme implies the data privacy. Since the circuit can be polynomial in the input size $n$, Bob can hide $O(nm)$ bits, were $m$ is the number of linear polynomials in the decomposed circuit, and it is assumed that all linear polynomials have input size $O(n)$ and at least a constant fraction of them are of size $\Theta(n)$. This protocol still has the problem that many intermediate results may depend on the same set of initial variables, hence the initial variables are not very secure for long circuits. Thus, the number of allowed linear polynomials in the circuit is restricted to $O(n)$ and less than $c\cdot n$, where $c$ is some constant less than $1$ determined from the security analysis of Scheme~\ref{schsub2}.

To resolve this problem, we may change the combined scheme for evaluating linear polynomials in Step 2, so that at the top level it uses Scheme~\ref{schsub} rather than Scheme~\ref{schsub2}, in order to achieve better data privacy. This gives rise to worse circuit privacy, but from the fact that Alice cannot learn perfectly about both Bob's input and the masks, Alice's cheating would affect the correctness of evaluation (of the distributed result) after many gates. She can choose to always cheat by learning Bob's coefficients without learning Bob's masks at each polynomial, but then the final result would be on Alice's side only and is not distributed, so if some bits in the final result are to be sent to Bob, he would acquire a wrong result after taking the XOR of the sent bits from Alice and his local mask bits. Thus, we see that the altered method in this paragraph is only good for a changed problem in which the outcome of the computation is at least partially on Bob's side. Bob may use some output bits on his side to check that Alice has cheated or not. Such changed problem is a type of general bipartite computing problem, while the original problem is a more restrictive one.

The protocol in the previous paragraph hints at a different approach to the original problem: let Bob act as the first party in the instances of Scheme~\ref{schsub2} for  evaluating the linear polynomials, and treat his classical program as his data; and Alice's data is treated as her program. Then, since the final result is on Alice's side, this case matches the situation in the previous paragraph. Alice's input is not too secure after many evaluations of the linear polynomials. But in the case that the output is only one bit, we may postpone the local evaluation of linear polynomials to the end of the overall protocol, to do a one-level locking scheme (Scheme~\ref{schsub3} with $k=1$, but with the sending of the mask bit omitted) at the end, so that Alice only leaks a constant number of bits ($2$ bits) to Bob, while Bob's circuit is not completely known to Alice. Finally Bob sends Alice a mask bit for her to obtain the final outcome. This is very different from the security characteristics obtained in the protocols in the previous paragraphs, in which Alice's data is asymptotically secure. The data privacy is partial, similar to that obtainable by Scheme~\ref{schcheck}, but note that success is guaranteed here, while Scheme~\ref{schcheck} involves early terminations in the case that some party cheats; but the output size here is limited to one bit (or a constant number of bits if we allow Alice to leak a constant number of bits of information about her data), while Scheme~\ref{schcheck} allows the output size to be comparable to input size. The protocol here also assumes that Bob's circuit is quite random within the possible choices, which is a reasonable assumption.

For the interactive QHE problem, the methods above do not directly apply, since quantum information cannot be copied in general.

\section{Conclusions}\label{sec10}

In this paper, we have constructed a quantum homomorphic encryption scheme for a restricted type of circuits with perfect data privacy for real product input states. We then constructed a QHE scheme for a larger class of polynomial-depth quantum circuits, which has partial data privacy. Both schemes have good circuit privacy. The entanglement and classical communication costs scale linearly with the product of input size and circuit depth. We then introduced the interactive Scheme~\ref{schmain} for general polynomial-sized quantum circuits, which has asymptotic data privacy, but the circuit privacy is not good. We proposed Scheme~\ref{schcheck} for improving circuit privacy by embedded verifications, while the data privacy is good except for a leak of a constant number of bits. It works for the case that the input size is larger than some constant determined by the security requirements. Finally, we present some schemes for computing classical linear polynomials which achieve some nontrivial level of data privacy and circuit privacy, and show applications in partially secure evaluation of classical functions and some special case of the QHE problem.

The reason why rebits are used in Schemes~\ref{sch1} and \ref{sch2} here is to avoid data leakage: if qubits with complex amplitudes are used, after some measurements are done on the Alice side in the protocol, and the outcomes are sent to Bob, it is often the case that some information about the data would have been sent to Bob. It would be good to understand better the ``phase qubit'' and its possible applications. It would also be good to combine the techniques in Schemes~\ref{sch1} and \ref{sch2} with other schemes to yield some interesting protocol, not necessary for the QHE problem.

Some further issues to investigate include: a better characterization of the type of circuits allowed in Scheme~\ref{sch2}; better characterization of the security of the schemes for evaluation of classical linear polynomials; how to optimize the schemes for particular sets of allowed circuits; how to turn Scheme~\ref{schmain} into a non-interactive scheme (possibly restricting the type of circuits or with compromise on security); how to make the schemes fault-tolerant; whether there are corresponding schemes in the measurement-based computing model. In view of the results on classical-client blind quantum computing by Mantri \emph{et al} \cite{Mantri17}, and computationally-secure QHE scheme with almost classical client \cite{Mahadev17}, we may ask whether there is a QHE scheme with some nontrivial level of information-theoretic data privacy and circuit privacy for polynomial-sized circuits, satisfying that one of the parties is fully classical.

\smallskip
\section*{Acknowledgments}

This research is funded in part by the Ministry of Science and Technology of China under Grant No. 2016YFA0301802, and the Department of Education of Zhejiang province under Grant No. Y201737289, and the startup grant of Hangzhou Normal University.

\linespread{1.0}
\bibliographystyle{unsrt}
\bibliography{homo}

\begin{appendix}
\section{Proof of data privacy in Schemes~\ref{sch1} and \ref{sch2}}\label{appd1}

\noindent\textbf{(1). Proof for Theorem~\ref{thm1}.}
\bpf
Alice's measurement outcomes in the gate gadgets are uniformly random and independent. In the following we argue that Alice's fixed gates, gate gadgets together with  sending of her measurement outcomes to Bob do not reveal information about the data. As a preparation, since Alice's measurement outcomes can be compensated by $\Zgate$ gates on Bob's qubits, we may remove Alice's qubit in the EPR pair in the gate gadgets and directly link Bob's qubit in the gadget with Alice's data qubit with a controlled-$i\sigma_y$ gate.

After the above preparation, the fixed gates and the gate gadgets on Alice's side contain only Clifford gates. This means that if we apply some initial $\R_y(\pi)$ operator on Alice's input qubits, they will ``commute'' through the circuit with Pauli corrections on all the qubits, including Bob's qubits. Since Bob's qubits only interact with Alice's qubits via the controlled-$i\sigma_y$ gates, it can be found that any nontrivial Pauli correction on Bob's qubits can only be $\Zgate$ up to a phase. For each input data qubit, there is a subset of Bob's qubits that is subject to $\Zgate$ corrections resulting from the $\Xgate$ operator on the associated Alice's qubit, which is due to initial $\R_y(\pi)$ on that input data qubit. An initial $\R_y(\pi)$  on the phase qubit does not change Bob's reduced density operator. Since Bob's qubits are connected to Alice's part of the circuit via the controlled-$i\sigma_y$ gates, the $Z$-basis information of each Bob's qubit is reflected in Alice's part of the circuit as a choice of $\igate$ or $\R_y(\pi)$. Let us consider the reduced density operator of Bob's. We argue that the off-diagonal terms are such that they are insensitive to the $\Zgate$ gates induced by Alice's initial $\R_y(\pi)$ masks (the diagonal terms are trivially so, since $\Zgate$ only applies phases on the computational-basis states). Consider an off-diagonal term of the form $\ketbra{(g_i)}{(h_i)}$, where $\ket{(g_i)}$ is the short-hand notation for $\ket{g_1}\ket{g_2}\dots\ket{g_m}$, where $g_1,g_2,\dots,g_m$ are bits, and $m$ is the total number of Bob's qubits. Note that the $\Zgate$ gates induced by a fixed initial mask pattern [$\R_y(\pi)$ on some input data qubits] may only bring a sign to this off-diagonal term when
\begin{eqnarray}\label{eq:gihiprime}
\sum_{i=1}^m \left(g'_i+h'_i\right)\,\,\rm{mod}\,\,2=1,
\end{eqnarray}
where $g'_i=g_i$ when the qubit $i$ is subject to the $\Zgate$ correction resulting from this mask pattern, and $g'_i=0$ otherwise; the $h'_i$ is similarly defined.

The two bit strings $(g_i)$ and $(h_i)$ correspond to two patterns of $\R_y(\pi)$'s in Alice's part of the circuit. In order to calculate the coefficient for the off-diagonal term $\ketbra{(g_i)}{(h_i)}$, we need to calculate the partial trace on Alice's side. Consider using the circuit diagram of Alice's side with the target qubits in controlled-$i\sigma_y$ gates replaced with $\R_y(\pi)$ when the corresponding $g_i=1$, and another circuit diagram similarly obtained using the bits $h_i$ instead of the $g_i$. The partial trace is usually calculated by using a concatenated circuit with the output ends of the two original circuits connected on each qubit line. Since it is symmetric, we may instead consider a ``difference'' circuit where the $\R_y(\pi)$ appears at the qubit originally linked to Bob's $i$-th qubit when
\begin{eqnarray}\label{eq:gihi}
\left(g_i+h_i\right)\,\,\rm{mod}\,\,2=1.
\end{eqnarray}

Suppose some initial mask pattern changes an off-diagonal term of the form $\ketbra{(g_i)}{(h_i)}$ in Bob's reduced density operator, then it must be that this term does not vanish. We now prove that the term satisfying Eq.~\eqref{eq:gihiprime} must vanish, then we will get that no initial mask pattern may change an off-diagonal term. Note that when we commute $\Xgate$ or $\R_y(\pi)$ gates through a data qubit line in the circuit, they either preserve themselves (with possible corrections on neighboring qubits) or swap with each other. Thus, only an even number of $\R_y(\pi)$ operators on a qubit line may cancel out after commuting them on this line. If there is a qubit line on which the $\R_y(\pi)$ operators resulting from nonzero $g_i$ and $h_i$ cancel out [it must be that their resulting $\R_y(\pi)$ operators on the phase-qubit line also cancel out], then we may remove these operators, and change these $g_i$ and $h_i$ (and associated $g'_i$ and $h'_i$, if any) to zero, without affecting Eqs.~\eqref{eq:gihiprime} and \eqref{eq:gihi} (which is due to the definition of $g'_i$ and $h'_i$ and the types of gates in the circuit), and look at a reduced case with fewer nonzero $g_i$ and $h_i$. After these reductions, we have that on at least one qubit line, the operators do not cancel out. [If no such line exists, Eq.~\ref{eq:gihiprime} is violated.] We may cancel out the $\R_y(\pi)$ operators on the same line in pairs, without affecting the two equations, so that only one or two operators remain on each line. On each data qubit other than the first data qubit, either there is no correction operator, in which case this line can be removed without affecting the two equations, or there is only one $\R_y(\pi)$ operator, which when applied on the real input state on that data qubit would give rise to zero trace on Alice's side. Now we consider the operators on the first data qubit and the phase qubit. Since we only consider those off-diagonal terms satisfying Eq.~\eqref{eq:gihiprime}, the joint operator on these two qubits (corresponding to such off-diagonal term) must be one of the following: $\Xgate\ox\R_y(\pi)$, or $\Zgate\ox\R_y(\pi)$. Since the initial state on these two qubits are real, it can be found that these operators map the initial state to an orthogonal state. Therefore the partial trace on Alice side is zero, which implies that the term satisfying Eq.~\eqref{eq:gihiprime} must vanish. Thus it must be that no initial mask pattern may change any off-diagonal term in Bob's reduced density operator.

In the above, we have shown that for any pure input state of the required type, and its $\R_y(\pi)$-padded states [``$\R_y(\pi)$-padded'' means that there are $\R_y(\pi)$ gates applied on some of the qubits, possibly including the phase qubit], the reduced density operators of Bob's system after correcting for Alice's measurement outcomes are the same. Thus, for a subset of such input states, namely real product states, Bob's reduced density operator is equal to that for the input being the average state of all $\R_y(\pi)$-padded states, which is a fixed maximally-mixed state. Thus, these reduced density operators of Bob's are fixed regardless of what the pure input state is. Thus Bob does not receive any information about the input data in the case that the input state is restricted to be a real product state. For the other type of input pure state, that is, the tensor product of an arbitrary two-qubit real state on the first data qubit and the phase qubit, and a real product state of other data qubits, we can similarly consider applying initial $\Zgate$ or $\Xgate$ masks to the first data qubit of the input state, and $\R_y(\pi)$ masks on other data qubits, and similarly find that such initial mask pattern does not change any off-diagonal term in Bob's reduced density operator.  By considering initial Pauli masks on the phase qubit, we find that all possible Pauli masks on the phase qubit does not change Bob's reduced density operator. Consequently, all possible Pauli masks on the first data qubit and the phase qubit does not change Bob's reduced density operator. Therefore, the logical input state of the first qubit (encoded on the first data qubit and the phase qubit) does not affect Bob's reduced density operator. The $\R_y(\pi)$ masks on other data qubits still do not matter. This implies that Bob receives no information about the input, for pure input states of the required type. The case of probabilistic mixture of such states follows by linearity. This completes the proof.
\epf

\noindent\textbf{(2). Proof for Theorem~\ref{thm2}.}
\begin{proof}
(i) As in the proof of Theorem~\ref{thm1}, to consider a change from one of the two input states to the other, we only need to consider applying $\R_y(\pi)$ on each data qubit and a $\R_y(\theta)$ on the phase qubit. These operators commute through the circuit, and they sometimes become $\Xgate$ operators, resulting in some $Z$ operators on some of Bob's qubits. Consider an off-diagonal term in Bob's reduced density matrix. Such term does not vanish if and only if some $\R_y(\pi)$ operators on some data-qubit lines in some ``difference'' circuit commute back to the beginning of the circuit and meet with the input state to yield nonzero trace on Alice's side.
The operators at the beginning of the circuit is always a product operator, and if $\R_y(\pi)$ appears on the phase qubit, the type of input state in the assertion to be proved would be mapped to an orthogonal state, giving rise to zero trace on Alice's side. The case of nonzero trace only occurs when there are an even number of $\R_y(\pi)$ in the ``difference'' circuit that are in positions that were subject to $\Xgate$ operators in the original circuit (each resulting in a $\Zgate$ operator on one of Bob's qubits). From Eq.~\ref{eq:gihiprime}, this means that Bob's off-diagonal term does not vanish only when it is not affected by the choice of input state (between the two possible input states). Thus, Bob's reduced density matrix is independent of the choice of input state (between the two possible input states).\\
\indent (ii) The argument follows from the proof of Theorem~\ref{thm1}, including the ``averaging'' argument in the last paragraph of the proof of Theorem~\ref{thm1}. The one-qubit Pauli operators on the data qubit in the proof of Theorem~\ref{thm1} are replaced with operators on some data qubits here. In particular, the initial logical $\R_y(\pi)$ on $n$ data qubits (where $n$ is odd) is regarded as $\R_y(\pi)$ on all data qubits. The initial logical $\Zgate$ on $n$ data qubits is regarded as $\Zgate$ on all data qubits. The assumption of $n$ being odd is essential, since $\R_y(\pi)$ on an even number of qubits is not the logical $\R_y(\pi)$, under the current encoding of $\ket{0}$ and $\ket{1}$ into $\ket{0}^{\ox n}$ and $\ket{1}^{\ox n}$, respectively. It follows from the above types of correspondences of Pauli operators that the first $n$ qubits of the two encoded states could also be any computational-basis state and $\Xgate^{\ox n}$ acting on such state. This completes the proof.
\end{proof}

\section{The garden-hose gadget that corrects an unwanted $\pgate$ gate}\label{appd2}

The Fig.~\ref{fig:toy2} below shows a simplified version of a gadget in \cite{Dulek16} for correcting an unwanted $\pgate$ gate due to a $\tgate$ gate in the circuit with certain prior Pauli corrections. The input qubit starts from the position ``in'', and ends up in a qubit on Bob's side labeled ``out1'' or ``out2'', depending on Bob's input bit $p$. The unwanted $\pgate$ on this qubit is corrected, but some other Pauli corrections are now needed because of the Bell-state measurements. These Pauli corrections are to be accounted for in the later evaluation of linear polynomials in the instances of Scheme~\ref{schsub} in the main Scheme~\ref{schmain}.
Note that in each use of this gadget, some of the Bell-state measurements are not actually performed, dependent on the value of $p$ and $q$; and that Alice's Bell-state measurements are on the same pairs of qubits irrespective of $q$.

\begin{figure*}[h!]
\centering
\begin{tikzpicture}[decoration=snake]
\filldraw (0,0) circle (2pt);
\filldraw (2,0) circle (2pt);
\draw[decorate] (0,0) -- (2,0);

\filldraw (0,-1) circle (2pt);
\filldraw (2,-1) circle (2pt);
\draw[decorate] (0,-1) -- (2,-1);

\filldraw (0,1) circle (2pt);
\filldraw (2,1) circle (2pt);
\draw[decorate] (0,1) -- (2,1);

\filldraw (0,2) circle (2pt);
\filldraw (2,2) circle (2pt);
\draw[decorate] (0,2) -- (2,2);

\node at (-1.5,4) {$p$};
\node at (-3,3.5) {$0$};
\node at (-0.3,3.5) {$1$};
\draw (-0.3,1) to [bend left=50] (-0.3,3);
\draw (-3,2) to [bend left=50] (-3,3);

\node at (5,4) {$q$};
\node at (3.3,3.5) {$0$};
\node at (6.5,3.5) {$1$};
\draw (3.3,0) to [bend right=50] (3.3,2);
\draw (6.5,-1) to [bend right=50] (6.5,1);

\filldraw (0,3) circle (2pt);
\node[anchor=west] at (0,3) {in};

\node[anchor=west] at (-0.2,0.3) {out1};
\node[anchor=west] at (-0.2,-0.7) {out2};

\draw (3.3,-1) to [bend right=50] (3.3,1) to (2.3,1);
\draw (6.5,0) to [bend right=50] (6.5,2) to (5.5,2);
\filldraw[fill=white] (5.7,1.75) rectangle (6.3,2.25);
\filldraw[fill=white] (2.5,0.75) rectangle (3.1, 1.25);
\node at (6,2) {$\pgate^\dag$};
\node at (2.8,1) {$\pgate^\dag$};

\draw[dashed] (-4.5,-2) rectangle (0.75,5);
\draw[dashed] (1.3,-2) rectangle (8,5);
\node[anchor=south] at (-4,5) {Bob};
\node[anchor=south] at (7.5,5) {Alice};
\end{tikzpicture}
\caption{A simplified version of a gadget in \cite{Dulek16} for applying a $\pgate^\dag$ to a qubit initially at the position ``in'' if and only if $p+q=1\,\,(\rm{mod}\,\,2)$, using the ``garden hose'' method. The dots connected by wavy lines are EPR pairs. The curved lines are for Bell-state measurements. For example, if $p=0$ and $q=1$, the qubit is teleported through the first and the third EPR pairs, with a $\pgate^\dag$ applied to it by Alice in between. The state of the input qubit always ends up in a qubit on Bob's side, and the position depends on Bob's input bit $p$: if $p=0$, the output is on the qubit labeled ``out1'', otherwise it is on the qubit labeled ``out2''. Note that in each use of this gadget, only $3$ Bell-state measurements are actually performed; Alice's Bell-state measurements are on the same pairs qubits irrespective of $q$.}
\label{fig:toy2}
\end{figure*}
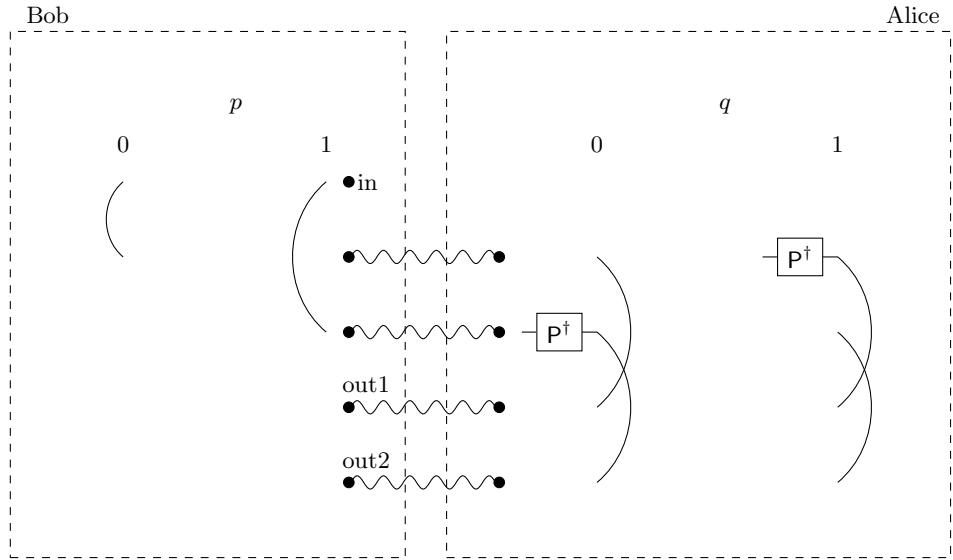

\section{Classical mutual information in the case of uniformly distributed data in Scheme~\ref{schsub2}}\label{appd3}

This appendix complements Theorem~\ref{thm5}. We consider the classical mutual information under the condition that Alice's input distribution is uniformly random. Such quantity is abbreviated as CMI below.

First, we show that in the case $k=1$ in Scheme~\ref{schsub2}, the CMI between Alice's input and Bob received state is $n-1+\frac{1}{2^n}$ bits.

The construction of the scheme implies that Bob should always measure the first qubit in each pair in the $Z$ basis, and the second qubit in each pair in the $X$ basis, to get maximal information about the data. In the following, we assume that Bob indeed does such measurements, and regard the outcomes as the classical bits obtained by Bob. Since the input and output are both effectively classical, the classical mutual information is equal to the Holevo quantity. Thus, a direct calculation of the Holevo quantity should give the CMI. In the following we give an alternative way of calculating the CMI in the case of $k=1$ with uniformly distributed data.

Since the input distribution is uniformly random, the bits received by Bob are uniformly random. For each input bit $x_i$, there is one half probability that the corresponding two bits obtained by Bob is $00$ or $11$. In such case, the bit $x_i$ is determined. In other cases, two input bits $x_i$ and $x_j$ may be correlated. This can happen when the two bits obtained by Bob are $01$ or $10$, for both $x_i$ and $x_j$. The contribution to the overall mutual information is $\frac{1}{4}$ bits, because the case just mentioned happens with probability $\frac{1}{4}$. If we add up all $C_n^2$ such pairs, there is some overcounting. We may subtract the three-pair coincidences, then add back the four-pair coincidences, etc, and get the following expression for the classical mutual information:

\bea\label{eq:mutualinfo}
&&\frac{n}{2}+\sum_{j=2}^n \frac{(-1)^j}{2^j}C_n^j\notag\\
&=&\frac{n}{2}+(1-\frac{1}{2})^n-1+\frac{n}{2}=n-1+\frac{1}{2^n}\quad\rm{(bits)}
\eea

In the following we discuss the general case of $k>1$ in Scheme~\ref{schsub2}. Let us first look at the special case of $n=k=2$. For each input bit $x_i$, there is one half probability that the corresponding two bits obtained by Bob for each qubit is $01$ or $10$. For the input bit $x_i$, the probability that both pairs of bits obtained by Bob are $00$ or $11$ is $\frac{1}{4}$, which contributes $\frac{1}{4}$ bits of information to the CMI, and two input bits contribute $\frac{1}{2}$ bits in total. For each input bit $x_i$, the probability that both pairs of bits obtained by Bob are $01$ or $10$ is $\frac{1}{4}$. If both the bits for $x_1$ and $x_2$ satisfy this, Bob gets $1$ bit of information about the correlation between the two input bits. This gives $\frac{1}{16}$ bits of contribution to the CMI. The same amount of correlation information may also be obtained by Bob when the two pairs of bits for $x_1$ obtained by Bob contain exactly one pair being $01$ or $10$, and exactly the same happens for $x_2$ with the $01$ or $10$ in the same position labeled by $j$ in Scheme~\ref{schsub2}. Such cases give $2\times\frac{1}{16}$ bits of contribution to the CMI. The overall CMI in the case $n=k=2$ is $\frac{1}{2}+3\times(\frac{1}{4})^2=\frac{11}{16}$ bits, where $3=2^2-1$, with the explanation being that the case of both input bits correspond to $00$ or $11$ at every position is excluded. By the similar argument, For general $k$ and fixed $n=2$,
the CMI is $\frac{2}{2^k}+\frac{2^k-1}{2^{2k}}=\frac{3}{2^k}-\frac{1}{2^{2k}}$ bits.

For $n>2$ and general $k$, we may combine the methods for the case of $n>2$ and $k=1$, and the case of $n=2$ and general $k$, and obtain that the part of CMI induced by the single-bit and two-bit correlations among $\{x_i\}$ is
\bea\label{eq:mutualinfo2}
&&\frac{n}{2^k}+\sum_{j=2}^n (-1)^j\frac{2^k-1}{2^{jk}}C_n^j\notag\\
&=&\frac{n}{2^k}+(2^k-1)[(1-\frac{1}{2^k})^n-1+\frac{n}{2^k}]\notag\\
&=&n-(2^k-1)[1-(1-\frac{1}{2^k})^n]\quad\rm{(bits)}.
\eea
When $n$ and $k$ are large, and $k>C\log_2 n$ for some constant $C>2$, the expression above is $O(\frac{n^2}{2^k})$ bits. But multi-bit correlations are ignored above. To consider the effect of multi-bit correlations, note that the linear correlations among some subset of $\{x_i\}$ is learnable by Bob if and only if the corresponding vectors are linearly dependent, where the vector for a $x_i$ is defined as a length-$k$ bit string, and each nonzero bit in such string represents that some pair of bits obtained by Bob is $01$ or $10$ (as opposed to $00$ or $11$). When $n=k$ and both are large, the $n$ bit strings contain, on average, about a constant number of linearly dependent strings.  Hence, we estimate that the CMI is about a constant number of bits when $n=k$ and both are large. For the case that $k$ is much larger than $n$, say when $k=\lceil\gamma n\rceil$ with the constant $\gamma>\frac{4}{3}$, we estimate that the number of linearly dependent strings are zero, thus the CMI is near zero. The CMI in this case is $O(\frac{{\rm poly}(n)}{2^{k-n}})$ bits, as the contribution from multi-bit correlations is not negligible compared to that of two-bit correlations. But the circuit privacy is quite bad in this case, since $k>n$; such parameter range may still be interesting due to the possible methods of verifications by Alice, or in the case discussed in Sec.~\ref{sec8}.
\end{appendix}
\end{document}